\newcommand{\tbf}[1]{\textbf{#1}}
\DeclarePairedDelimiter\abs{\lvert}{\rvert}%
\DeclarePairedDelimiter\norm{\lVert}{\rVert}%
\let\oldabs\abs
\def\abs{\@ifstar{\oldabs}{\oldabs*}}
\let\oldnorm\norm
\def\norm{\@ifstar{\oldnorm}{\oldnorm*}}
\DeclarePairedDelimiter{\ceil}{\lceil}{\rceil}%
\DeclarePairedDelimiter{\floor}{\lfloor}{\rfloor}%
\let\oldceil\ceil
\def\ceil{\@ifstar{\oldceil}{\oldceil*}}
\let\oldfloor\floor
\def\floor{\@ifstar{\oldfloor}{\oldfloor*}}
\DeclarePairedDelimiter{\tuple}{\langle}{\rangle}%
\let\oldtuple\tuple
\def\tuple{\@ifstar{\oldtuple}{\oldtuple*}}
\theoremstyle{plain}
\newtheorem{theorem}{Theorem}
\newtheorem{definition}[theorem]{Definition}
\newtheorem{claim}[theorem]{Claim}
\newtheorem{observation}[theorem]{Observation}
\newtheorem{lemma}[theorem]{Lemma}
\newif\iffullver \fullverfalse
\newif\ifoldversions \oldversionsfalse
\DeclareMathOperator{\Write}{\mathtt{write}}
\DeclareMathOperator{\Writes}{\mathtt{write}\text{s}}
\DeclareMathOperator{\Read}{\mathtt{read}}
\DeclareMathOperator{\Reads}{\mathtt{read}\text{s}}
\newcommand{\mytext}[1]{\text{\textit{#1}}}
\newcommand{\data}{\mytext{data}}
\newcommand{\meta}{meta-data}
\newcommand{\ie}{i.e., }
\newcommand{\eg}{e.g., }
\newcommand{\bobj}{o}
\newcommand{\bo}[1]{\bobj_{#1}}
\renewcommand{\wr}[1]{\Write(#1)}
\newcommand{\V}{\mathbb{V}}
\newcommand{\C}{\mathbb{B}}
\newcommand{\W}{\mathbb{W}}
\newcommand{\N}{\mathbb{N}}
\newcommand{\op}{\mathtt{a}}
\newcommand{\update}{\mathtt{update}}
\newcommand{\updates}{\mathtt{update}\text{s}}
\newcommand{\get}{\mathtt{get}}
\renewcommand{\gets}{\mathtt{get}\text{s}}
\newcommand{\step}{\op_p}
\newcommand{\alg}{\mathcal{A}}
\newcommand{\commonWrite}{$\tau$-common write}
\newcommand{\disint}{$\tau$-disintegrated storage}
\newcommand{\values}[1]{\mytext{values}_{p} \left(#1 \right)}
\newcommand{\writes}[1]{\mytext{writes}_{p} \left( #1 \right)}
\newcommand{\sharedStorage}{shared storage}
\newcommand{\twor}[1]{\multirow{2}{*}{#1}}
\newcommand{\tomath}[1]{\relax\ifmmode#1\else$#1$\fi}
\newcommand{\resCommonValueInvisible}{\tomath{\tau + (\tau-1) \ceil{\mneedceil}}}
\newcommand{\resCommonValueVisible}{\tomath{\tau + (\tau-1) \cdot \min \left( \ceil{\mneedceil} \, , \, R \right) }}
\newcommand{\resCommonWriteInvisible}{\tomath{\tau \cdot 2^D}}
\newcommand{\resCommonWriteVisible}{\tomath{\tau + (\tau-1) \cdot \min \left(2^D - 1 \, , \, R \right)}}
\newcommand{\object}{object}
\newcommand{\objects}{objects}
\newcommand{\Objects}{Objects}
\newcommand{\indist}{\approx_w}
\newcommand{\mneedceil}{\frac{2^D-1}{L}}
\newcommand{\mnoneedceil}{-1}
\newcommand{\mdef}{\ceil{\mneedceil} \mnoneedceil}
\newcommand{\commonValueN}{\min \left( \ceil{\mneedceil} \, , \, R \right) - 1}
\newcommand{\commonWriteN}{\min \left(2^D - 1 \, , \, R \right) - 1}
\newcommand{\slabelsnoarg}{\mytext{S--labels}}
\newcommand{\llabelsnoarg}{\mytext{L--labels}}
\newcommand{\alllabelsnoarg}{\mytext{All--labels}}
\newcommand{\slabels}[1]{\slabelsnoarg{} \left(#1\right)}
\newcommand{\llabels}[1]{\llabelsnoarg{}_p\left(#1\right)}
\newcommand{\alllabels}[1]{\alllabelsnoarg{}_p\left(#1\right)}
\newcommand{\mylabel}[1]{\left\langle #1 \right\rangle}
\newcommand{\Label}[1]{\mytext{Labels}(#1)}
\newcommand{\sharedlabel}[1]{\Label{#1}}
\newcommand{\finalof}[1]{t_{#1}}
\newcommand{\invariantValues}[1]{\varphi \left( #1 \right)}
\newcommand{\invariantWrites}[1]{\psi \left( #1 \right)}
\newcommand{\pdata}[2][p]{#1.\data \left(#2\right)}
\newcommand{\odata}[1]{\pdata[o]{#1}}
\newcommand{\suffix}[2]{$#1 \setminus #2$}
\newcommand{\permanent}[3]{\tuple{#1, \; #3, \; #2} \text{-permanent}}
\newcommand{\rfinal}[1]{#1, \finalof{#1}}
\newcommand{\constant}[3][\tau-1]{\tuple{#1, \; #3, \; #2} \text{-constant}}
\newcolumntype{Y}{>{\centering\arraybackslash}X}
\title{Integrated Bounds for Disintegrated Storage}
\newcommand{\myparagraph}[1]{\paragraph{#1}\mbox{}\\ \mbox{}\\}
\author[1]{Alon Berger}
\author[1]{Idit Keidar}
\author[1,2]{Alexander Spiegelman}
\affil[1]{Viterbi Department of Electrical Engineering, Technion, Haifa, Israel}
\affil[2]{VMware Research, Israel}
\date{}
\begin{document}
	\maketitle
	
\abstract{
	We point out a somewhat surprising similarity between non-authenticated Byzantine storage, coded storage, and certain emulations of shared registers from smaller ones.
	A common characteristic in all of these is the inability of reads to safely return a value obtained in a single atomic access to shared storage.
	We collectively refer to such systems as \emph{disintegrated storage}, and show integrated space lower bounds for asynchronous regular wait-free emulations in all of them.
	In a nutshell, if readers are invisible, then the storage cost of such systems is inherently exponential in the size of written values; otherwise, it is at least linear in the number of readers.
	Our bounds are asymptotically tight to known algorithms, and thus justify their high costs.
}

\section{Introduction}
\label{sec:introduction}

\subsection{Space bounds for encoded, multi-register, and Byzantine storage}

In many data sharing solutions, information needs to be read from multiple sources in order for a single value to be reconstructed.
One such example is coded storage where multiple storage blocks need to be obtained in order to recover a single value that can be returned to the application~\cite{codingCachin2006optimal, codingLynchCadambe2014coded, codingRashid, codingGoodson2004efficient, spaceBounds, peterson1983concurrent, dobre2013powerstore, androulaki2014erasure}.
Another example arises in shared memory systems, where the granularity of atomic memory operations (such as load and store) is limited to a single word (e.g., 64 bits) and one wishes to atomically read and write larger values~\cite{peterson1983concurrent}.
A third example is replicating data to overcome Byzantine faults (without authentication) or data corruption, where a reader expects to obtain the same block from multiple servers in order to validate it~\cite{faultyMemoryJayanti1998, faultyMemorybyzantineDiscPaxos, abraham2007wait}.

We refer to such systems collectively as \emph{disintegrated storage} systems.
We show that such a need to read data in multiple storage accesses inherently entails high storage costs: exponential in the data size if reads do not modify the storage, and otherwise linear in the number of concurrent reads.
This stands in contrast to systems that use non-Byzantine replication, such as ABD~\cite{ABD}, where, although \meta{} (\eg timestamps) is read from several sources, the recovered value need only be read from a single source.

\subsection{Our results}

We consider a standard shared storage model (see Section~\ref{sec:model}).
We refer to shared storage locations (representing memory words, disks, servers, etc.) as \emph{\objects}.
To strengthen our lower bounds, we assume that \objects{} are responsive, \ie do not fail; the results hold a fortiori if \objects{} can also be unresponsive~\cite{faultyMemoryJayanti1998}.
\Objects{} support general read-modify-write operations by asynchronous \emph{processes}.
We study \emph{wait-free} emulations of a shared \emph{regular register}~\cite{lamportRegular}.

Section~\ref{sec:disintegrated_storage} formally defines \emph{disintegrated storage}.
We use a notion of \emph{blocks}, which are parts of a value kept in storage -- code blocks, segments of a longer-than-word value, or full copies of a replicated value.
A key assumption we make is that each block in the shared storage pertains to a single write operation; a similar assumption was made in previous studies~\cite{spaceBounds, cadambe2016podc}.
The disintegration property then stipulates that a reader must obtain some number $\tau > 1$ of blocks pertaining to a value $v$ before returning $v$.
For example, $\tau$ blocks are needed in $\tau$-out-of-$n$ coded storage, whereas $\tau = f+1$ in $f$-tolerant Byzantine replication.
To strengthen our results, we allow the storage to hold unbounded \meta{} (\eg timestamps), and count only the storage cost for blocks.
Note that the need to obtain $\tau$ blocks implies that \meta{} cannot be used instead of actual data.

In Section~\ref{sec:commonValue} we give general lower bounds that apply to all types of disintegrated storage~-- replicated, coded, and multi-register.
We first consider \emph{invisible reads}, which do not modify the shared storage.
This is a common paradigm in storage systems and often essential where readers outnumber writers and have different permissions.
In this case, even with one reader and one writer, the storage size can be exponential; specifically, if value sizes are $D$ (taken from a domain of size $2^D$), then we show a lower bound of \resCommonValueInvisible{} blocks, where $L$ is the number of blocks in a reader's local storage.
That is, either the local storage of the reader or the shared storage is exponential.

Section~\ref{sec:commonWrite} studies a more restrictive flavor of disintegrated storage, called \emph{\commonWrite}, where a reader needs to obtain $\tau$ blocks produced by the \emph{same} $\wr{v}$ operation in order to return $v$.
In other words, if the reader obtains blocks that originate from two different $\Writes$ of the same value, then it cannot recognize that they pertain to the same value, as is the case when blocks hold parts of a value or code blocks rather than replicas.
In this case, the shared storage cost is high independently of the local memory size.
Specifically, we show a bound of \resCommonWriteInvisible{} blocks with invisible readers.
In systems that use symmetric coding (\ie where all blocks are of the same size, namely at least $D/\tau$ bits), this implies a lower bound of $D\cdot 2^D$ bits.
For a modest value size of 20 bytes, the bound amounts to $2.66 \cdot 10^{37} \ \mathrm{TB}$, and for 1KB values it is a whopping $1.02 \cdot 10^{2457} \ \mathrm{TB}$.

We further consider \emph{visible reads}, which can modify the objects' \meta{}.
Such readers may indicate to the writers that a read is ongoing, and signal to them which blocks to retain.
Using such signals, the exponential bound no longer holds -- there are emulations that store a constant number of values per reader~\cite{abraham2007wait, chen2016opodis, peterson1983concurrent, androulaki2014erasure}.
We show that such linear growth with the number of readers is inherent.
Our results are summarized in Table~\ref{tab:results_summary}.

\begin{table}[t]
	\centering
	\begin{tabularx}{\textwidth}{@{}|c|Y|Y|@{}}
		\hline
		& \tbf{Invisible Reads}           & \tbf{Visible Reads}           \\
		\hline
		\twor{\tbf{General Case}} & \twor{\resCommonValueInvisible} & \twor{\resCommonValueVisible} \\
		&                                 &                               \\
		\hline
		\tbf{Common Write}        & \twor{\resCommonWriteInvisible} & \twor{\resCommonWriteVisible} \\
		(\eg coded storage)       &                                 &                               \\
		\hline
	\end{tabularx}
	\caption{Lower bounds on shared storage space consumption, in units of blocks; $D$ is the value size, $\tau > 1$ is the number of data blocks required in order to recover a value, $L \ge 1$ is the maximal number of blocks stored in a reader's local data, and $R$ the number of readers.}
	\label{tab:results_summary}
\end{table}

These bounds are tight as far as regularity and wait-freedom go: relaxing either requirement allows circumventing our results~\cite{faultyMemoryJayanti1998, faultyMemorybyzantineDiscPaxos}.
As for storage cost, our lower bounds are asymptotically tight to known algorithms, whether reads are visible~\cite{peterson1983concurrent, abraham2007wait, androulaki2014erasure} or not~\cite{bazzi2004non, codingGoodson2004efficient, martin2002minimal, dobre2013powerstore}.

We note that the study of the inherent storage blowup in asynchronous coded systems has only recently begun~\cite{spaceBounds, cadambe2016podc} and is still in its infancy.
In this paper, we point out a somewhat surprising similarity between coded storage and other types of shared memory/storage, and show unified lower bounds for all of them.
Section~\ref{sec:discussion} concludes the paper and suggests directions for future work.

\subsection{Related work and applicability of our bounds} \label{subsect:related_work}

Several works have studied the space complexity of register emulations.
Two recent works~\cite{cadambe2016podc, spaceBounds} show a dependence between storage cost and the number of \emph{writers} in crash-tolerant storage, identifying a trade-off between the cost of replication ($f+1$ copies for tolerating $f$ faults) and that of $\tau$-out-of-$n$ coding (linear in the number of writers).
Though they do not explicitly consider disintegrated storage, it is fairly straightforward to adapt the proof from~\cite{spaceBounds} to derive a lower bound of $\tau W$ blocks with $W$ writers.
Here we consider the case of single-writer algorithms, where this bound is trivial.
Other papers~\cite{aguileraDisks2003podc, chocklerSpiegelmanPODC2017} show limitations of multi-writer emulations when objects do not support atomic read-modify-write, whereas we consider single-writer emulations that do use read-modify-write.

Chockler et al.~\cite{chockler2007amnesic} define the notion of \emph{amnesia} for register emulations with an infinite value domain, which intuitively captures the fact that an algorithm ``forgets'' all but a finite number of values written to it.
They show that a wait-free regular emulation tolerating non-authenticated Byzantine faults with invisible readers cannot be amnesic, but do not show concrete space lower bounds.
In this paper we consider a family of disintegrated storage algorithms, with visible and invisible readers, and show concrete bounds for the different cases; if the size of the value domain is unbounded, then our invisible reader bounds imply unbounded shared storage.

Disintegrated storage may also correspond to emulations of large registers from smaller ones, where $\tau$ is the size of the big register divided by the size of the smaller one.
Some algorithms in this vein, e.g.,~\cite{peterson1983concurrent}, indeed have the disintegration property, as the writer writes $\tau$ blocks to a buffer and a reader obtains $\tau$ blocks of the same write.
These algorithms are naturally subject to our bounds.
Other algorithms, e.g.,~\cite{chaudhuri2000one, chen2016opodis, lamportRegular}, do not satisfy our assumption that each block in the shared storage pertains to a single $\Write$ operation, and a reader may return a value based on blocks written by \emph{different} $\Write$ operations.
Thus, our bounds do not apply to them.
It is worth noting that these algorithms nevertheless either have readers signal to the writers and use space linear in the number of readers, or have invisible readers but use space exponential in the value size.
Following an earlier publication of our work, Wei~\cite{wei2018space} showed that these costs -- either linear in the number of visible readers or exponential in the value size with invisible ones -- are also inherent in emulations of large registers from smaller ones that \emph{do} share blocks among writes, albeit do not use meta-data at all.
Several questions remain open in this context: first, Wei's bound is not applicable to all types of storage we consider (in particular, Byzantine), and does not apply to algorithms that use timestamps.
Second, we are not familiar with any regular register emulations where readers write-back data, and it is unclear whether our bound may be circumvented this way.

Non-authenticated Byzantine storage algorithms that tolerate $f$ faults need to read a value $f+1$ times in order to return it, and are thus $\tau$-disintegrated for $\tau = f+1$.
Note that while our model assumes objects are responsive, it a fortiori applies to scenarios where objects may be unresponsive.
Some algorithms circumvent our bound either by providing only safe semantics~\cite{faultyMemoryJayanti1998}, or by forgoing wait-freedom~\cite{faultyMemorybyzantineDiscPaxos}.
Others use channels with unbounded capacity to push data to clients~\cite{martin2002minimal, bazzi2004non} or potentially unbounded storage with best-effort garbage collection~\cite{codingGoodson2004efficient}.

As for coded storage, whenever $\tau$ blocks are required to reconstruct a value, the algorithm is $\tau$-disintegrated.
And indeed, previous solutions in our model require unbounded storage or channels~\cite{codingRashid, codingCachin2006optimal, codingGoodson2004efficient, codingLynchCadambe2014coded, dobre2013powerstore}, or retain blocks for concurrent visible readers, consuming space linear in the number of readers~\cite{androulaki2014erasure}.
Our bounds justify these costs.
Our assumption that each block in the shared storage pertains to a single value is satisfied by almost all coded storage algorithms we are aware of, the only exception is~\cite{wangCadambeMultiversion2014}, which indeed circumvents our lower bound but does not conform to regular register semantics.
Other coded storage solutions, e.g.,~\cite{codingAguilera2005using}, are not subject to our bound because they may recover a value from a single block.	
	
\section{Preliminaries} \label{sec:model}

\myparagraph{Shared storage model}
We consider an asynchronous shared memory system consisting of two types of entities: A finite set $O = \{\bo{1}, \ldots, \bo{n}\}$ of \objects{} comprising \emph{\sharedStorage}, and a set $\Pi$ of processes.
Every entity in the system stores \emph{data}: an \object{}'s data is a single block from some domain $\C$, whereas a process' data is an array of up to $L$ blocks from $\C$.
We assume a bound $L$ on the number of blocks in the data array of each process.
In addition, each entity stores potentially infinite \meta{}, \emph{meta}.
We denote an entity $e$'s data as $e.\data$ and likewise for $e.meta$.
A system's \emph{storage cost} is the number of objects in the \sharedStorage{}, $n$.

\Objects{} support atomic $\get$ and $\update$ \emph{actions} by processes.
We denote by $\op_p$ an action $\op$ performed by $p$ and by $\bobj.\op_p$ an $\op_p$ action at $\bobj$.
An $\bobj.\update_p$ is an arbitrary read-modify-write that possibly writes a block from $\C$ to $\bobj.\data$ and modifies $\bobj.meta$, $p.meta$, and $p.\data$.
An $\bobj.\get_p$ may replace a block in $p.\data$ with $\bobj.\data$ and may modify $p.meta$.

\myparagraph{Algorithms, configurations, and runs}
An \emph{algorithm} defines the behaviors of processes as deterministic state machines, where state transitions are associated with actions.
A \emph{configuration} is a mapping to states (data and meta) from all system components, \ie processes and \objects{}.
In an \emph{initial configuration} all components are in their initial states.

We study algorithms (executed by processes in $\Pi$) that emulate a high-level functionality, exposing high-level operations, and performing low-level $\gets / \updates$ on \objects{}.
We say that high-level operations are \emph{invoked} and \emph{return} or \emph{respond}.
Note that, for simplicity, we model $\gets$ and $\updates$ as instantaneous actions, because the \objects{} are assumed to be atomic, and we do not explicitly deal with \object{} failures in this paper.

A \emph{run} of algorithm $\alg{}$ is a (finite or infinite) alternating sequence of configurations and actions, beginning with some initial configuration, such that configuration transitions occur according to $\alg{}$.
Occurrences of actions in a run are called \emph{events}.
The possible events are high-level operation invocations and responses and $\get$/$\update$ occurrences.
We use the notion of time $t$ during a run $r$ to refer to the configuration reached after the $t^{th}$ event in $r$.
For a finite run $r$ consisting of $t$ events we define $\finalof{r} \triangleq t$.
Two operations are \emph{concurrent} in a run $r$ if both are invoked in $r$ before either returns.
If a process $p$'s state transition from state $\mathcal{S}$ is associated with a low-level action $\op_p \in \{\get_p,\update_p\}$, we say that $\op_p$ is \emph{enabled} in $\mathcal{S}$.
A run $r'$ is an \emph{extension} of a (finite) run $r$ if $r$ is a prefix of $r'$;
we denote by \suffix{r'}{r} the suffix of $r'$ that starts at $\finalof{r}$.
If a high-level operation $op$ has been invoked by process $p$ but has not returned by time $t$ in a run $r$, we say that $op$'s invocation is \emph{pending} at $t$ in $r$. We assume that each process' first action in a run is an invocation, and a process has at most one pending invocation at any time.

For $e \in \Pi \cup O$, we denote by $e.\data(r,t)$ the set of distinct blocks stored in $e.\data$ at time $t$ in a run $r$.
Since for an object $o$, $\abs{\odata{r,t}} = 1$, we sometimes refer to $\odata{r,t}$ as the block itself, by slight abuse of notation.
We say that $p$ \emph{obtains} a block $b$ at time $t$ in a run $r$, if $b \notin \pdata{r,t}$ and $b \in \pdata{r,t+1}$.

\myparagraph{Register emulations}
We study algorithms that emulate a shared \emph{register}~\cite{lamportRegular}, which stores a value $v$ from some domain $\V$.
We assume that $\abs{\V} = 2^D > 1$, \ie values can be represented using $D > 0$ bits.
For simplicity, we assume that each run begins with a dummy initialization operation that writes the register's initial value and does not overlap any operation.
The register exposes high-level $\Read_p$ and $\Write_p(v)$ operations of values $v \in \V$ to processes $p \in \Pi$.
We consider single-writer (SW) registers where the application at only one process (the \emph{writer}) invokes $\Writes$, and hence omit the subscript $p$ from $\wr{v}$.
The remaining $R \triangleq \abs{\Pi} - 1$ processes are limited to performing $\Reads$, and are referred to as \emph{readers}.
For brevity, we refer to the subsequence of a run where a specific invocation of a $\wr{v} / \Read_p$ is pending simply as a $\wr{v} / \Read_p$ operation.

We assume that whenever a $\Read_p$ operation is invoked at time $t$ in a run $r$, $\pdata{r,t}$ is empty.
We consider two scenarios: (1) \emph{invisible reads}, where $\Reads$ do not use $\updates$, and (2) \emph{visible reads}, where $\Reads$ may perform $\updates$ that update \meta{} (only) in the shared storage.
Note that readers do \emph{not} write actual data, which is usually the case in regular register emulations, defined below.
In a single-reader (SR) register $R=1$, and if $R>1$ the register is multi-reader (MR).
If the states of the writer and the \objects{} at the end of a finite run $r$ are equal to their respective states at the end of a finite run $r'$, we say that $\finalof{r}$ and $\finalof{r'}$ are \emph{indistinguishable} to the writer and \objects{}, and denote: $\finalof{r} \indist \finalof{r'}$.

Our safety requirement is \emph{regularity}~\cite{lamportRegular}: a $\Read$ $rd$ must return the value of either the last $\Write$ $w$ that returns before $rd$ is invoked, or some write that is concurrent with $rd$.
For liveness, we require \emph{wait-freedom}, namely that every operation invoked by a process $p$ returns within a finite number of $p$'s actions.
In other words, if $p$ is given infinitely many opportunities to perform actions, it completes its operation regardless of the actions of other processes.
	
\section{Disintegrated storage} \label{sec:disintegrated_storage}

As noted above, existing wait-free algorithms of coded and/or Byzantine-fault-tolerant storage with invisible readers may store all values ever written~\cite{bazzi2004non, codingGoodson2004efficient, martin2002minimal, codingCachin2006optimal, cadambe2016podc, codingRashid, dobre2013powerstore}.
This is because if old values are erased, it is possible for a slow reader to never find sufficiently many blocks of the same value so as to be able to return it.
If readers are visible, then a value per reader is retained.
We want to prove that these costs are inherent.
The challenge in proving such space lower bounds is that the aforementioned algorithms use unbounded timestamps.
How can we show a space lower bound if we want to allow algorithms to use unbounded timestamps?
We address this by allowing \meta{} to store timestamps, etc., and by not counting the storage cost for \meta{}.
For example, the above algorithms store timestamps in \meta{} alongside data blocks and use them to figure out which data is safe to return, but still need $\tau$ actual blocks/copies of a value in order to return it.
Note that for the sake of the lower bound, we do not restrict how \meta{} is used; all we require is that the algorithm read $\tau$ data blocks of the same value (or $\Write$), and we do not specify how the algorithm knows that they pertain to the same value (or $\Write$).
To formalize the property that the algorithm returns $\tau$ blocks pertaining to the same value or $\Write$, we need to track, for each block in the shared storage, which $\Write$ produced it.
To this end, we define \emph{labels}.
Labels are only an analysis tool, and do not exist anywhere.
In particular, they are not timestamps, not \meta{}, and not explicitly known to the algorithm.
As an external observer, we may add them as abstract state to the blocks, and track how they change.

\myparagraph{Labels}
We associate each block $b$ in the shared or local storage with a set of labels, $\sharedlabel{b}$, as we now explain.
For an algorithm $\alg{}$ and $v \in \V$, denote by $\W_v^{\alg}$ the set of $\wr{v}$ operations invoked in runs of $\alg{}$.
For $V \subseteq \V$, we denote $\W_V^{\alg} \triangleq \bigcup_{v \in V} \W_v^{\alg}$, and let $\W^{\alg} \triangleq \W_{\V}$.
For clarity, we omit $\alg{}$ when obvious from the context, and refer simply to $\W_v$, $\W_V$, and $\W$.
We assume that the $k^{th}$ $\update$ event occurring in a $\Write$ operation $w \in \W$ tags the block $b$ it stores (if any) with a unique label $\mylabel{w,k}$, so $\Label{b}$ becomes $\{ \mylabel{w,k}\}$.

Whereas our assumption that each block in the shared storage pertains to a single write rules out associating multiple labels with such a block, we do allow the reader's \meta\ to recall multiple accesses encountering the same block.
For example, when blocks are copies of a replicated value, the reader can store one instance of the value in local memory and keep a list of the objects where the value was encountered.
To this end, a block in a reader's \textit{data} may be tagged with multiple labels:
when a reader $p$ obtains a block $b$ from an object $o$ at time $t$ in a run $r$, the block $b$ in $\pdata{r,t+1}$ is tagged with $\Label{\odata{r,t}}$;
if at time $t' > t$ $p.\data$ still contains $b$ and $p$ performs an action on an object $o'$ s.t.\ $\pdata[o']{r,t'} = b$ and the latter is tagged with label $\ell$, $p$ adds $\ell$ to $\Label{b}$ (regardless of whether $b$ is added to $p.\data$ once more).
When all copies of a block are removed from $p.\data$, all its labels are ``forgotten''.
We emphasize that labels are not stored anywhere, and are only used for analysis.

We track the labels of a value $v \in \V$ at time $t$ in a run $r$ using the sets $\slabels{v,r,t}$, of labels in the shared storage, $\llabels{v,r,t}$, of labels in process $p$'s local storage, and $\alllabels{v,r,t}$, a combination of both.
Formally,
\begin{itemize}
	\item
	$\slabels{v,r,t} \triangleq \left( \bigcup_{o \in O} \Label{\odata{r,t}} \right) \cap \left( \W_v \times \N \right)$.
	\item
	$\llabels{v,r,t} \triangleq \left( \bigcup_{b \in \pdata{r,t}} \Label{b} \right) \cap \left( \W_v \times \N \right)$.
	\item
	$\alllabels{v,r,t} \triangleq \llabels{v,r,t} \cup \slabels{v,r,t}$.
\end{itemize}
For a time $t$ in a run $r$ and $p \in \Pi$, we define
$\values{r,t} \triangleq \{ v \in \V \; \mid \; \llabels{v,r,t} \ne \emptyset \}$.

Similarly, we track labels associated with a particular $\Write$ $w \in \W$
accessible by process $p \in \Pi$ at time $t$ in a run $r$:
\begin{itemize}
	\item
	$\slabels{w,r,t} \triangleq \left( \bigcup_{o \in O} \Label{\odata{r,t}} \right) \cap \left( \{w\} \times \N \right)$.
	\item
	$\llabels{w,r,t} \triangleq \left( \bigcup_{b \in \pdata{r,t}} \Label{b} \right) \cap \left( \{w\} \times \N \right)$.
	\item
	$\alllabels{w,r,t} \triangleq \llabels{w,r,t} \cup \slabels{w,r,t}$.
\end{itemize}
We define $\writes{r,t} \triangleq \{ w \in \W \; \mid \; \llabels{w,r,t} \ne \emptyset \}$.
Note that for all $v \in \V$ and $w \in \W_v$, (1) $\slabels{w,r,t} \subseteq \slabels{v,r,t}$, (2) $\llabels{w,r,t} \subseteq \llabels{v,r,t}$, and (3) $\alllabels{w,r,t} \subseteq \alllabels{v,r,t}$.

Since readers do not write-back:
\begin{observation} \label{obs:no_new_labels}
	If the $t^{th}$ event in a run $r$ is of a reader $p \in \Pi$, then for all $v \in \V, w \in \W$: $\alllabels{v,r,t} \subseteq \alllabels{v,r,t-1}$ and $\alllabels{w,r,t} \subseteq \alllabels{w,r,t-1}$.
\end{observation}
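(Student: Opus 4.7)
The plan is a short bookkeeping case analysis based on two facts that are baked into the labeling scheme: new labels of the form $\mylabel{w,k}$ are only minted by the writer's $k^{th}$ $\update$ inside a $\wr{v}$ operation, and readers never modify any object's \data{} (in the invisible-reads setting they do no $\updates$ at all; in the visible-reads setting their $\updates$ touch \meta{} only). First I would use these to conclude that the shared-storage labels are literally unchanged. Since the $t^{th}$ event is by a reader $p$, for every $o \in O$ we have $\odata{r,t} = \odata{r,t-1}$, hence $\Label{\odata{r,t}} = \Label{\odata{r,t-1}}$; taking the union over $O$ and intersecting with $\W_v \times \N$ (respectively $\{w\} \times \N$) gives $\slabels{v,r,t} = \slabels{v,r,t-1}$ and $\slabels{w,r,t} = \slabels{w,r,t-1}$.

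Next I would trace how $\llabels{v,r,t}$ can differ from $\llabels{v,r,t-1}$ across this single event. By the label-maintenance rules, a reader step can change a local block's label set in only three ways: (a) $p$ obtains a block $b$ from some $o$, in which case the new local copy is tagged with $\Label{\odata{r,t-1}}$; (b) $p$ acts on an object $o$ whose current block already lies in $\pdata{r,t-1}$, in which case $\Label{\odata{r,t-1}}$ is merged into the existing local label set for $b$; or (c) the last local copy of some block is removed from $p.\data$, which discards all of its labels. Cases (a) and (b) inject only labels that were already in $\slabels{v,r,t-1}$ (respectively $\slabels{w,r,t-1}$), because no fresh $\mylabel{w,k}$ is minted at a reader step; case (c) strictly shrinks the local set. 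Hence $\llabels{v,r,t} \subseteq \llabels{v,r,t-1} \cup \slabels{v,r,t-1}$.

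Combining the two bounds gives $\alllabels{v,r,t} = \slabels{v,r,t} \cup \llabels{v,r,t} \subseteq \slabels{v,r,t-1} \cup \llabels{v,r,t-1} = \alllabels{v,r,t-1}$, and the identical argument with $\{w\}$ in place of $\W_v$ yields $\alllabels{w,r,t} \subseteq \alllabels{w,r,t-1}$. Essentially nothing here is hard; the only care-point is case (b), where $b$ already resides in $\pdata{}$ and $p$ merely augments its label set rather than creating a new entry~-- one must verify that the \emph{augmenting} labels are exactly $\Label{\odata{r,t-1}}$ and therefore sit inside $\slabels{v,r,t-1}$, which in turn is guaranteed precisely because the event is by a reader and does not create a fresh label at time $t$.
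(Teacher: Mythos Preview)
Your proposal is correct and follows exactly the reasoning the paper intends: the observation is stated in the paper without proof, justified only by the preceding clause ``Since readers do not write-back,'' and your case analysis is simply a careful unpacking of that one-line justification. There is no alternative approach to compare against; you have supplied the details the paper leaves implicit.
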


\myparagraph{Disintegrated storage}
Intuitively, in disintegrated storage register emulations, for a $\Read_p$ to return $v$, $p$ must encounter $\tau > 1$ blocks corresponding to $v$ that were produced by separate $\update$ events.
To formalize this, we use labels:
\begin{definition}[\disint{}]
	If a return of $v \in \V$ by a $\Read_p$ invocation is enabled at time $t$ in a run $r$ then $\abs{\llabels{v,r,t}} \ge \tau$.
\end{definition}
Thus, a reader can only return $v$ if it recalls (in its local memory) obtaining blocks of $v$ with $\tau$ different labels.

A more restrictive case of \disint{} occurs when readers cannot identify whether two blocks pertain to a common value unless they are produced by a common write that identifies them, \eg with the same timestamp.
This is the case when value parts or code words are stored in \objects{} rather than full replicas.

To capture this case, for a block $b \in \bigcup_{e \in O \cup \Pi} e.\data$, a value $v \in \V$, and a $\Write$ $w \in \W_v$, if $\exists k \in \N$ s.t.\ $\mylabel{w,k} \in \Label{b}$, we say that $w$ is an \emph{origin write} of $b$ and $v$ is an \emph{origin value} of $b$.
Common write $\tau$-disintegrated storage is then defined:
\begin{definition}[common write $\tau$-disintegrated storage]
	If a return of $v \in \V$ by a $\Read_p$ invocation is enabled at time $t$ in a run $r$ then $\exists w \in \W_v \; : \; \abs{\llabels{w,r,t}} \ge \tau$.
\end{definition}
Note that we do not further require $p.\data$ to actually hold $\tau$ blocks with a common write, because the weaker definition suffices for our lower bounds.
For brevity, we henceforth refer to a common write \disint{} algorithm simply as \commonWrite{}.

\myparagraph{Permanence}
Our lower bounds will all stem, in one way or another, from the observation that in wait-free disintegrated storage, every run must reach a point after which some values (and in the case of common write, also some $\Writes$) must \emph{permanently} have a certain number of blocks in the shared storage.
This is captured by the following definition:
\begin{definition}[permanence]
	Consider a finite run $r$, $k \in \N$, a set $S \subseteq \V$, and a set of readers $\Theta \subset \Pi$.
	Let $z \in \V \cup \W$ be a value or a $\Write$ operation.
	We say that $z$ is \emph{$\permanent{k}{S}{\Theta}$} in $r$ if in every finite extension $r'$ of $r$ s.t.\ in \suffix{r'}{r} readers in $\Theta$ do not take actions and $\Writes$ are limited to values from $S$, $\abs{\slabels{z,r',\finalof{r'}}} \ge k$.
\end{definition}
Intuitively, this means that the shared storage continues to hold $k$ blocks of $z$ as long as readers in $\Theta$ do not signal to the writer and only values from $S$ are written.
For brevity, when the particular sets $S$ and $\Theta$ are not important, we refer to the value shortly as $k$-permanent.
The observation below follows immediately from the definition of permanence:
\begin{observation} \label{obs:write-permanence-implies-value-permanence} \label{obs:permanence_in_constructions}
	Let $v \in \V$, $w \in \W_v$, $k \in \N$, $V_2 \subseteq V_1 \subseteq \V$, $\Theta_1 \subseteq \Theta_2 \subset \Pi$.
	\begin{enumerate}
		\item
		If $w$ is $\permanent{k}{V_1}{\Theta_1}$ in a finite run $r$ then $v$ is $\permanent{k}{V_1}{\Theta_1}$ in $r$.
		\item 
		If $v$ is $\permanent{k}{V_1}{\Theta_1}$ in a finite run $r$ then $v$ is $\permanent{k}{V_2}{\Theta_2}$ in all finite extensions $r'$ of $r$ where in \suffix{r'}{r} $\Writes$ are limited to values from $V_1$ and readers in $\Theta_1$ do not take actions.
	\end{enumerate}
\end{observation}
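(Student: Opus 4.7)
The plan is to unfold the definition of $\permanent{k}{S}{\Theta}$ and rely on the inclusion $\slabels{w,r,t} \subseteq \slabels{v,r,t}$ for $w \in \W_v$ noted just after the label definitions. Both parts follow by definition chasing; there is no combinatorial content.

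For part~1, I would fix an arbitrary finite extension $r'$ of $r$ whose suffix \suffix{r'}{r} contains no actions by readers in $\Theta_1$ and only $\Writes$ of values in $V_1$. The $\permanent{k}{V_1}{\Theta_1}$-permanence of $w$ in $r$ gives $\abs{\slabels{w,r',\finalof{r'}}} \ge k$. Since $w \in \W_v$, the inclusion above yields $\slabels{w,r',\finalof{r'}} \subseteq \slabels{v,r',\finalof{r'}}$, and therefore $\abs{\slabels{v,r',\finalof{r'}}} \ge k$. Because $r'$ was an arbitrary admissible extension of $r$, $v$ is $\permanent{k}{V_1}{\Theta_1}$ in $r$.

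For part~2, I would use a ``concatenation'' argument. Let $r'$ be any finite extension of $r$ whose suffix \suffix{r'}{r} obeys the $(V_1, \Theta_1)$ restriction, and let $r''$ be an arbitrary finite extension of $r'$ whose suffix \suffix{r''}{r'} obeys the $(V_2, \Theta_2)$ restriction. Viewing $r''$ directly as an extension of $r$, I would observe that throughout the combined suffix \suffix{r''}{r} every $\Write$ stores a value from $V_1$ (using $V_2 \subseteq V_1$ on the latter part) and no reader in $\Theta_1$ takes any action (using $\Theta_1 \subseteq \Theta_2$ on the latter part). Hence the $\permanent{k}{V_1}{\Theta_1}$-permanence of $v$ in $r$ applies to $r''$, yielding $\abs{\slabels{v,r'',\finalof{r''}}} \ge k$. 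Since $r''$ was arbitrary, $v$ is $\permanent{k}{V_2}{\Theta_2}$ in $r'$, as required.

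The only mild subtlety is tracking which suffix inherits which restriction in part~2, which is why I would spell out the concatenation explicitly and flag where the inclusions $V_2 \subseteq V_1$ and $\Theta_1 \subseteq \Theta_2$ are used, rather than leaving the bookkeeping implicit. I expect no substantive obstacle, consistent with the authors' remark that the observation follows immediately from the definition.
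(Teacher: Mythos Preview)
Your proposal is correct and is precisely the unfolding the paper has in mind: the authors give no proof beyond the remark that the observation ``follows immediately from the definition of permanence,'' and your argument is exactly that definition-chase, using the stated inclusion $\slabels{w,r,t} \subseteq \slabels{v,r,t}$ for part~1 and the concatenation of the two restricted suffixes (with $V_2 \subseteq V_1$ and $\Theta_1 \subseteq \Theta_2$) for part~2.
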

Since each object holds a single block associated with a single label:
\begin{observation} \label{obs:n_ge_slabels}
	For time $t$ in a run $r$, the number of objects is:
	$ n \ge \abs{\bigcup_{v \in \V} \slabels{v,r,t}}. $
\end{observation}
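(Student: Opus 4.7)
The plan is to unwind the definitions and use the assumption (already noted just above the observation) that every block held in an object corresponds to a single $\Write$ operation.

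First, I would observe that for each object $o \in O$, at time $t$ in run $r$ the object stores exactly one block $\odata{r,t}$, and this block was placed there by some $\update$ event of a single $\Write$ $w$. By the tagging rule, the label attached to $\odata{r,t}$ has the form $\mylabel{w,k}$ for a unique pair, so $\abs{\Label{\odata{r,t}}} \le 1$ (it is empty only if no write has ever touched $o$, e.g.\ before the dummy initialization or if the block has been overwritten without a new label -- in any case the cardinality is bounded by $1$).

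Next, I would rewrite the right-hand side. Since the sets $\W_v \times \N$ for $v \in \V$ are pairwise disjoint and cover all labels of the form $\mylabel{w,k}$,
\[
\bigcup_{v \in \V} \slabels{v,r,t} \;=\; \bigcup_{v \in \V} \left(\bigcup_{o \in O} \Label{\odata{r,t}}\right) \cap \left(\W_v \times \N\right) \;=\; \bigcup_{o \in O} \Label{\odata{r,t}}.
\]
Taking cardinalities and applying the subadditivity of $\abs{\cdot}$ together with the bound from the previous step yields
\[
\abs{\bigcup_{v \in \V} \slabels{v,r,t}} \;\le\; \sum_{o \in O} \abs{\Label{\odata{r,t}}} \;\le\; \sum_{o \in O} 1 \;=\; n,
\]
which is the claim.

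There is no real obstacle here; the only thing worth stating carefully is the single-label-per-object property, since this is precisely where the paper's modeling assumption that each shared block pertains to a single $\Write$ enters the argument. The rest is a short union-bound calculation and is essentially tautological once the definitions of $\slabelsnoarg$ are unfolded.
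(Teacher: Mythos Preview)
Your proposal is correct and follows exactly the reasoning the paper intends: the paper does not give a proof beyond the one-line remark ``Since each object holds a single block associated with a single label,'' and your argument simply unfolds that remark into the obvious union bound. There is nothing to compare---you have written out what the paper leaves implicit.
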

Thus, if there are $m$ different $k$-permanent values in a run, then $n \ge mk$.
We observe that with invisible readers, the set $\Theta$ is immaterial:
\begin{claim} \label{claim:permanence_with_invisible_reader}
	Consider $k \in \N$, $V \subseteq \V$, and a finite run $r$ with an invisible reader $p \in \Pi$.
	If $z \in \V \cup \W$ is $\permanent{k}{V}{\{p\}}$ in $r$ then $z$ is $\permanent{k}{V}{\emptyset}$ in $r$.
\end{claim}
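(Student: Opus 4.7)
The plan is to reduce the general case to the restricted case by a simple removal argument, exploiting the fact that an invisible reader's actions consist solely of $\gets$, which by definition do not modify any object's data or meta.

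Concretely, I would start with an arbitrary finite extension $r'$ of $r$ such that in the suffix \suffix{r'}{r} all $\Writes$ are limited to values from $V$; by the definition of permanence, it suffices to show $\abs{\slabels{z,r',\finalof{r'}}} \ge k$. I then construct a run $r''$ from $r'$ by deleting every event performed by $p$ in the suffix \suffix{r'}{r}, keeping every other event (and thus every writer and non-$p$ reader event) in its original order. I expect to argue that $r''$ is a legal run of $\alg$: since $p$ is invisible, each of $p$'s actions in \suffix{r'}{r} is an $\bobj.\get_p$, which by the model may only alter $p.\data$ and $p.meta$ and does not touch $\bobj.\data$, $\bobj.meta$, or the state of any other process. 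Hence, at every point in $r''$ the shared storage configuration and the state of every process other than $p$ coincide with those at the corresponding point of $r'$, so each retained event is enabled in $r''$ in exactly the same state as in $r'$.

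Next I would apply the hypothesis. In \suffix{r''}{r}, $p$ performs no actions and $\Writes$ are still limited to $V$, so $r''$ is one of the extensions quantified over in the $\permanent{k}{V}{\{p\}}$-permanence of $z$, giving $\abs{\slabels{z,r'',\finalof{r''}}} \ge k$. Finally, since the shared storage configuration (including the object blocks and the labels they carry, which change only via $\updates$ and are untouched by $p$'s $\gets$) at $\finalof{r''}$ equals that at $\finalof{r'}$, we get $\slabels{z,r',\finalof{r'}} = \slabels{z,r'',\finalof{r''}}$, yielding $\abs{\slabels{z,r',\finalof{r'}}} \ge k$ and hence the desired $\permanent{k}{V}{\emptyset}$-permanence.

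The only mildly subtle step will be checking that excising $p$'s events truly yields a valid run; this requires spelling out that $p$'s invisibility guarantees that no $\get_p$ alters any object's \data{} or \meta, nor the local state of any other process, so the deterministic transitions of the other processes are preserved verbatim. Everything else, including preservation of $\slabels{z,\cdot,\cdot}$ between $r'$ and $r''$, is then immediate from the fact that labels on blocks in shared storage are modified only by $\update$ events of $\Write$ operations.
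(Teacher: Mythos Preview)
Your proposal is correct and takes essentially the same approach as the paper: both construct $r''$ by excising $p$'s steps from the suffix of an arbitrary extension $r'$, invoke invisibility to conclude that the shared storage (and hence $\slabelsnoarg$) is unchanged, and then apply the $\permanent{k}{V}{\{p\}}$ hypothesis to $r''$. The only cosmetic difference is that the paper phrases this as a proof by contradiction and appeals to the $\indist$ relation rather than spelling out the enabledness argument, whereas you give the direct version.
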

\begin{proof}
	Assume by contradiction that there exists an extension $r'$ of $r$ where in \suffix{r'}{r} $\Writes$ are limited to values from $V$, $p$ takes steps, and $\abs{\slabels{z,\rfinal{r'}}} < k$.
	Let $r''$ be the extension of $r$ identical to $r'$ except in that $p$ does not take steps in \suffix{r''}{r}.
	Since $p$ is invisible, $\finalof{r''} \indist{\finalof{r'}}$, thus $\abs{\slabels{z,\rfinal{r''}}} < k$, in contradiction to $v$ being $\permanent{k}{V}{\{p\}}$ in $r$.
\end{proof}

The specific lower bounds for the four scenarios we consider differ in the number of permanent values/$\Writes$ and the number of blocks per value/$\Write$ ($k = \tau-1$ or $k = \tau$) we can force the shared storage to retain forever in each case.
Interestingly, our notion of permanence resembles the idea that an algorithm is not \emph{amnesic} introduced in~\cite{chockler2007amnesic} (see Section~\ref{subsect:related_work}), but is more fine-grained in specifying the number of permanent blocks and restricting executions under which they are retained.

\section{Lower bounds for disintegrated storage}
\label{sec:commonValue}

In this section we provide lower bounds on the number of objects required for \disint{} regular wait-free register emulations.
Section~\ref{subsect:commonValueGeneralObservations} proves two general properties of regular wait-free \disint{} algorithms.
We show in Section~\ref{subsect:commonValueInvisibleReads} that with invisible $\Reads$, unless the readers' local storage size is exponential in $D$, the storage cost of such emulations is at least exponential in $D$.
Finally, Section~\ref{subsect:commonValueVisibleReads} shows that if $\Reads$ are visible, then the storage cost increases linearly with the number of readers.

\subsection{General properties} \label{subsect:commonValueGeneralObservations}

We first show that because readers must make progress even if the writer stops taking steps, at least $2\tau - 1$ blocks are required regardless of the number of readers.

\begin{claim} \label{claim:2tauMinusOne}
	Consider $v_1,v_2 \in \V$ and a run $r$ of a wait-free regular \disint{} algorithm with two consecutive responded $\Writes$ $w_1 \in \W_{v_1}$ followed by $w_2 \in \W_{v_2}$.
	Let $p \in \Pi$ be a reader s.t.\ no $\Read_p$ is pending in $r$.
	Then there is a time $t$ between the returns of $w_1$ and $w_2$ when $\abs{\slabels{v_1,r,t}} \ge \tau$ and $\abs{\slabels{v_2,r,t}} \ge \tau - 1$.
\end{claim}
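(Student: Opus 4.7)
The plan is to find a moment $t^*$ in $r$ between the responses of $w_1$ and $w_2$ at which $\abs{\slabels{v_2,r,t^*}} = \tau-1$ and simultaneously $\abs{\slabels{v_1,r,t^*}} \ge \tau$. The central tool is a solo-extension gadget: at any finite prefix of $r$, freeze the writer and all processes other than $p$ and let $p$ perform a fresh $\Read_p$ to completion. Wait-freedom guarantees $p$ returns a value $v$; regularity restricts $v$ to be the value of the last responded $\Write$ or of a $\Write$ concurrent with the read. Since the writer is frozen and reader $\updates$ modify only \meta, the shared data blocks are immutable throughout the extension, so $\slabels{\cdot,r,\cdot}$ does not change; since $p.\data$ is empty at invocation, every label $p$ places in its local data during the extension originates at some object and is present in the frozen $\slabels{\cdot,r,\cdot}$. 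Combined with \disint{}, this forces $\abs{\slabels{v,r,t}} \ge \tau$ for the returned $v$.

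Applied immediately after $w_2$ responds, no $\Write$ is concurrent and $w_2$ is the latest responded, so regularity forces the read to return $v_2$, yielding $\abs{\slabels{v_2,r,t_N}} \ge \tau$, where $t_N$ is the time $w_2$ responds. Let $t_0$ be the time $w_1$ responds, so $\abs{\slabels{v_2,r,t_0}} = 0$. Only the writer's $\updates$ can alter $\slabels{v_2,r,\cdot}$, and each such step modifies at most one object, so the function $t \mapsto \abs{\slabels{v_2,r,t}}$ rises from $0$ to at least $\tau$ over $[t_0,t_N]$ in unit increments; hence it equals $\tau-1$ at some $t^* \in (t_0,t_N]$.

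To conclude I reapply the gadget at $t^*$. Since $w_2$ is still pending there, regularity permits the read to return only $v_1$ or $v_2$. A return of $v_2$ would force $\abs{\slabels{v_2,r,t^*}} \ge \tau$, contradicting $\abs{\slabels{v_2,r,t^*}} = \tau-1$; hence the read returns $v_1$ and $\abs{\slabels{v_1,r,t^*}} \ge \tau$. The crux of the proof, and the only nontrivial point, is the extension-gadget step itself: one must argue cleanly that the labels $p$ accumulates in $p.\data$ during a solo extension are a subset of the shared labels at the start of that extension. This rests on three facts together -- only writer $\updates$ create or remove shared labels, reader $\updates$ touch only \meta, and $p.\data$ is empty at the invocation of $p$'s $\Read$ so no stale labels linger from earlier $\Reads$.
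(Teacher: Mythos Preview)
Your approach is the same as the paper's, and the solo-extension gadget is argued correctly. There is one small gap: you assert $\abs{\slabels{v_2,r,t_0}} = 0$ at the time $t_0$ when $w_1$ responds, but the claim does not assume $w_2$ is the first $\Write$ of $v_2$ in $r$; earlier $\Writes$ of $v_2$ may have left $v_2$-labels in shared storage. The paper closes this with a case split: if $\abs{\slabels{v_2,r,t_0}} \ge \tau-1$ already, take $t = t_0$ (your gadget applied at $t_0$ forces the read to return $v_1$, giving $\abs{\slabels{v_1,r,t_0}} \ge \tau$); otherwise $\abs{\slabels{v_2,r,t_0}} < \tau-1$, and since the count changes by at most one per step and reaches $\ge \tau$ at $t_N$, it hits exactly $\tau-1$ at some $t^\ast$ strictly before $t_N$, where your final argument applies unchanged. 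With that fix your proof matches the paper's.
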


\begin{proof}
	We first argue that at the time $t_i$, $i \in \{1,2\}$ when $w_i$ returns, $\abs{\slabels{v_i,r,t_i}} \ge \tau$.
	Assume the contrary.
	We build a run $r'$ identical to $r$ up to $t_i$.
	In $r'$, only process $p$ performs actions after time $t_i$.
	Next, invoke a $\Read_p$ operation $rd$.
	By regularity and wait-freedom, $rd$ must return $v_i$.
	Before performing actions on objects, $\pdata{r',t_i}$ is empty, thus, from \disint{}, $p$ must encounter at least $\tau$ blocks with an origin value of $v_i$ in order to return it.
	Since no process other than $p$ takes actions, $\abs{\slabels{v_i,r',t'}} < \tau$ for all $t' \ge t_i$ onward, so $rd$ cannot find these blocks and does not return $v_i$, a contradiction.
	It follows that in $r'$ at $t_i$, and hence also in $r$ at $t_i$, $\abs{\slabels{v_i,r,t_i}} \ge \tau$.
	
	Next, if at $t_1$, $\abs{\slabels{v_2,r,t_1}} \ge \tau-1$ then we are done.
	Otherwise, observe that \objects{} are accessed one-at-a-time.
	Therefore, and since $\abs{\slabels{v_2,r,t_1}} < \tau-1$, there exists a time $t$ between $t_1$ and $t_2$ when $\abs{\slabels{v_2,r,t}} = \tau - 1$.
	
	Finally, assume that $\abs{\slabels{v_1,r,t}} < \tau$.
	Build a run $r''$ identical to $r$ up to $t$, where again only $p$ takes actions after $t$.
	As above, it follows by regularity, \disint{}, and $\pdata{r'',t} = \emptyset$, that $rd$ never returns, in violation of wait-freedom.
	It follows that $\abs{\slabels{v_1,r'',t}} = \abs{\slabels{v_1,r,t}} \ge \tau$.
	\qedhere
\end{proof}

The following lemma states that every non-empty set $V$ can be split into two disjoint subsets, where one contains a value that is $(\tau-1)$-permanent with respect to the other subset.
The idea is to show that in the absence of such a value, a reader's accesses to the shared storage may be scheduled in a way that prevents the reader from obtaining $\tau$ labels of the same value.
The logic of the proof is the following:
we restrict $\Writes$ to a set of values $V$, and consider the set $S$ of values with blocks in $p.\data \cap V$.
If no value in $S$ is $(\tau-1)$-permanent, then we can bring the shared storage to a state where none of the values in $S$ have $\tau$ labels, preventing the reader from obtaining the $\tau$ labels required to return.
By regularity, readers cannot return other values.
The formal proof is slightly more subtle, because it needs to consider $\llabelsnoarg_p$ as well as labels in the shared storage.
It shows that the total number of labels of values in $S$ (in both the shared and local storage) remains below $\tau$ whenever $p$ takes a step.

\begin{lemma} \label{lem:burning_lemma}
	Consider a non-empty set of values $V \subseteq \V$, a set of readers $\Theta \subset \Pi$, a reader $p \in \Pi \setminus \Theta$, and a finite run $r$ of a wait-free regular \disint{} algorithm.
	Then there is a subset $S \subseteq V$ of size $1 \le \abs{S} \le L$ and an extension $r'$ of $r$ where some value $v \in S$ is $\permanent{\tau-1}{V \setminus S}{\Theta \cup \{p\}}$ and s.t.\ in \suffix{r'}{r} $\Writes$ are limited to values from $V$ and readers in $\Theta$ do not take steps.
\end{lemma}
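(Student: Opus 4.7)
I plan to prove this by contradiction. Suppose no such $S$, $r'$, $v$ exist; equivalently, for every $S \subseteq V$ with $1 \le \abs{S} \le L$ and every extension $r'$ of $r$ satisfying the stated constraints, no $v \in S$ is $\permanent{\tau-1}{V\setminus S}{\Theta\cup\{p\}}$ in $r'$. Under this assumption, I will construct an infinite extension of $r$ in which $\Read_p$ is invoked right after $r$ yet never returns, contradicting wait-freedom.

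The engine of the construction is Observation~\ref{obs:no_new_labels}: $p$'s own actions cannot enlarge $\alllabels{v}$, and writes of values $u \ne v$ can only shrink $\slabels{v}$ and hence $\alllabels{v}$. So it suffices to maintain, at the end of each round, the invariant $\abs{\alllabels{v}} < \tau$ for every $v$ with a block in $p.\data$; since $\llabels{v} \subseteq \alllabels{v}$, the reader never accumulates $\tau$ local labels of any value, and by $\tau$-disintegration it cannot return. I invoke $\Read_p$ after $r$ and alternate an adversary round with each of $p$'s low-level actions. After $p$'s $k$-th action, let $T_k \subseteq V$ be the set of values with a block in $p.\data$, so $\abs{T_k} \le L$. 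For the inductive step the adversary looks ahead at $p$'s next action (possible by determinism), identifies $T_{k+1}$ together with any fresh $v_{new} \in T_{k+1}\setminus T_k$, and applies the contradiction assumption with $S = T_{k+1}$: since $v_{new}$ is not $\permanent{\tau-1}{V\setminus T_{k+1}}{\Theta\cup\{p\}}$, there is an extension using only writes in $V\setminus T_{k+1}$ (with $\Theta$ and $p$ silent) ending with $\abs{\slabels{v_{new}}} < \tau - 1$. Because those writes target values outside $T_{k+1}$, they cannot grow $\alllabels{v}$ for any $v \in T_k$, so the invariant is preserved for old values; and once $p$ then takes its action, $\abs{\llabels{v_{new}}}$ becomes at most $1$ while $\slabels{v_{new}}$ is unaffected by Observation~\ref{obs:no_new_labels}, giving $\abs{\alllabels{v_{new}}} \le 1 + (\tau - 2) < \tau$ as required. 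An induction on rounds yields an infinite run where $\abs{\llabels{v}} < \tau$ always, contradicting wait-freedom.

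The main technical obstacle is a mild circularity: the adversary's writes can affect which object $p$ next accesses and hence what $v_{new}$ becomes. I would resolve this by iterating within an adversary round---tentatively fix writes, simulate $p$'s resulting next action, revise, and repeat. Since shared storage is finite and the writes only ever reduce, never enlarge, labels of values already in $T_k$, this inner loop stabilizes in finitely many steps. A secondary subtlety is that when $\abs{T_{k+1}} > 1$ one must reduce several values' shared labels simultaneously; this is handled by composing per-value extensions sequentially, relying on the fact that each subsequent extension uses writes in $V\setminus T_{k+1}$ and so cannot re-inflate the labels of values whose reductions have already been achieved.
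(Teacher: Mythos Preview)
Your overall strategy---contradict the lemma, invoke a $\Read_p$, and maintain the invariant that every value currently represented in $p.\data$ has fewer than $\tau$ total labels---is exactly the paper's. The difference is in the order of events within a round, and this is where your proposal develops a real gap. You have the adversary \emph{look ahead} at $p$'s next action, then insert writes, then let $p$ act. As you yourself note, this creates a circularity: the inserted writes may change the block residing in the object $p$ is about to access (the object itself is fixed by $p$'s local state, but its content is not), so the anticipated $v_{new}$ can shift. Your proposed fix---iterate until stable---is not justified: the argument ``writes only ever reduce labels of values already in $T_k$'' says nothing about termination, because the value that keeps changing is $v_{new}$, not a member of $T_k$, and each inner iteration may plant a fresh candidate $v_{new}'$ in the target object. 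There is no obvious monotone quantity here.

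The paper avoids all of this by reversing the order: let $p$ take its step \emph{first}, observe which new value $u$ (if any) it actually obtained, and only \emph{then} invoke the contradiction hypothesis with $S=\values{\cdot}\cap V$ at that moment to drive $\abs{\slabels{u}}$ below $\tau-1$ via writes in $V\setminus S$. Since $p$ has already acted, there is no look-ahead and no circularity; and since at most one new value enters $p.\data$ per step, your ``secondary subtlety'' about reducing several values simultaneously never arises---only $u$ needs treatment, while Observation~\ref{obs:no_new_labels} handles the rest. One further small omission: before invoking $\Read_p$ you must complete any pending operations and perform a fresh $\wr{v_0}$ with $v_0\in V$, so that regularity confines the read's legal return values to $V$; otherwise the invariant (which only controls labels of values in $V$) does not by itself block a return.
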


\begin{proof}
	Assume by contradiction that the lemma does not hold.
	We construct an extension $r'$ of $r$ where a $\Read_p$ operation includes infinitely many actions of $p$ yet does not return.
	To this end, we show that the following property holds at specific times in \suffix{r'}{r}:
	\[ \invariantValues{\hat{r},t} \triangleq \forall v \in \values{\hat{r},t} \cap V \, : \, \abs{\alllabels{v,\hat{r},t} } < \tau. \]
	
	First, extend $r$ to $r_0$ by returning any pending $\Read_p$ and $\Write$, invoking and returning a $\wr{v_0}$ for some $v_0 \in \V$ (the operations eventually return, by wait-freedom), and finally invoking a $\Read_p$ operation $rd$ without allowing it to take actions.
	We now prove by induction that for all $k \in \N$, there exists an extension $r'$ of $r_0$ where (1) $\invariantValues{\rfinal{r'}}$ holds and in \suffix{r'}{r}: (2) $\Writes$ are restricted to values from $V$, (3) $p$ performs $k$ actions on objects following $rd$'s invocation, and (4) $rd$'s return is not enabled, and (5) processes in $\Theta$ do not take steps.
	
	Base: for $k = 0$, consider $r' = r_0$.
	(3,5) hold trivially.
	(2) holds since the only $\Write$ in \suffix{r'}{r} is of $v_0 \in V$.
	Since $p$ performs no actions following the invocation of $rd$, $\pdata{r',\finalof{r'}}$ is empty.
	Therefore, (1) $\invariantValues{\rfinal{r'}}$ is vacuously true, and $\llabels{v,r,t}$ is empty for all $v \in \V$, thus (4) $rd$'s return is not enabled by \disint{}.
	
	Step: assume inductively such an extension $r_1$ of $r_0$ with $k \ge 0$ actions performed by $p$ following $rd$'s invocation.
	Since $rd$ cannot return, by wait-freedom, an action $\step$ is enabled on some \object{}.
	We construct an extension $r_2$ of $r_1$ by letting $\step$ occur at time $\finalof{r_1}$.
	We consider two cases:
	
	1. $p$ does not obtain a block with an origin value in $V \setminus \values{r_1, \finalof{r_1}}$ at $\step$, thus $\values{r_2,\finalof{r_2}} \cap V \subseteq \values{r_1,\finalof{r_1}} \cap V$.
	Then, by Observation~\ref{obs:no_new_labels} and the inductive hypothesis, (1) $\invariantValues{r_2,\finalof{r_2}}$ holds and thus, by \disint{}, $rd$ cannot return any value $v \in \values{r_2,\finalof{r_2}} \cap V$ at $\finalof{r_2}$.
	(4) It cannot return any other value in $\values{r_2,\finalof{r_2}}$ by regularity, and $r_2$ satisfies the induction hypothesis for $k+1$, as (2,3,5) trivially hold.
	
	2. $p$ obtains a block with origin value $u \in V \setminus \values{r_1,\finalof{r_1}}$ at time $\finalof{r_1}$.
	Then $\abs{\llabels{u,r_2,\finalof{r_2}}} = 1$.
	By Observation~\ref{obs:no_new_labels} and the inductive hypothesis, for all $v \in \values{\rfinal{r_2}} \setminus \{u\}$, $\abs{\llabels{v,\rfinal{r_2}}}<\tau$, and thus $rd$'s return is not enabled at time $\finalof{r_2}$ by \disint{} and regularity.
	
	Let $S = \values{\rfinal{r_2}} \cap V$, and note that $\abs{S} \ge 1$ (since $u \in S$) and that $\abs{S} \le \abs{p.\data} \le L$.
	By the contradicting assumption, $u$ is not $\permanent{\tau-1}{V \setminus S}{\Theta \cup \{p\}}$ in $r_2$, thus there exists an extension $r_3$ of $r_2$ s.t.\ $\abs{\slabels{u,\rfinal{r_3}}} < \tau-1$ and in \suffix{r_3}{r_2} $\Writes$ are limited to values from $V \setminus S$ and no readers in $\Theta \cup \{p\}$ take steps (3,5 hold).
	Since $p$ takes no steps in \suffix{r_3}{r_2}, we have that $\llabels{u,\rfinal{r_3}} = \llabels{u,\rfinal{r_2}}$, yielding:
	\begin{equation} \label{eq:Phi_holds_for_u}
	\abs{\alllabels{u,\rfinal{r_3}}} \le \abs{\llabels{u,\rfinal{r_2}}} + \abs{\slabels{u,\rfinal{r_3}}} < 1 + (\tau-1) = \tau.
	\end{equation}
	
	All $\Writes$ invoked after $\finalof{r_2}$ are from $\W_{V \setminus S}$ (2 holds), and therefore do not produce new labels associated with values in $S$.
	Since no values in $S$ are written after $\finalof{r_1}$ and readers' actions do not affect the sets $\slabelsnoarg$, by Observation~\ref{obs:no_new_labels}, we have that $\forall v \in S$, $\alllabels{v,\rfinal{r_3}} \subseteq \alllabels{v,r_1,\finalof{r_1}}$, and since $\invariantValues{r_1,\finalof{r_1}}$ holds (inductively) and $S \setminus \{u\} \subseteq \values{r_1,\finalof{r_1}} \cap V$,
	\begin{equation} \label{eq:Phi_holds_for_S_minus_u}
	\forall v \in S \setminus \{u\} \, : \, \abs{\alllabels{v,\rfinal{r_3}}} \le \abs{\alllabels{v,r_1,\finalof{r_1}}} < \tau.
	\end{equation}
	From Equations~\ref{eq:Phi_holds_for_u} and~\ref{eq:Phi_holds_for_S_minus_u}, and since $\values{r_3,\finalof{r_3}} \cap V = \values{r_2,\finalof{r_2}} \cap V = S$, we get $\invariantValues{r_3,\finalof{r_3}}$ (1).
	Since $rd'$s return was not enabled at time $\finalof{r_2}$ and it took no actions since, its return is still not enabled (4), and we are done.
	\qedhere
\end{proof}

\subsection{Invisible reads} \label{subsect:commonValueInvisibleReads}
We now consider a setting of a single reader and single writer where $\Reads$ are invisible.
To show the following theorem, we ``blow up'' the shared storage by repeatedly invoking Lemma~\ref{lem:burning_lemma}, each time adding one more $(\tau-1)$-permanent value, yielding the following bound:

\begin{theorem} \label{thm:common_value_invisible_reads_bound}
	The storage cost of a regular \disint{} wait-free SRSW register emulation where $\Reads$ are invisible is at least \resCommonValueInvisible\ blocks.
\end{theorem}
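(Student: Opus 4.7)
The plan is to construct a run in which, at some time $t$, the shared storage holds at least $\tau + (\tau-1) m$ distinct labels, where $m := \ceil{\mneedceil}$. Since labels of distinct values come from disjoint sets (each label $\mylabel{w,k}$ identifies a unique $\Write$ of a unique value), Observation~\ref{obs:n_ge_slabels} will then give $n \geq \tau + (\tau-1) m$, as required.

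The construction proceeds in two phases. In the first phase, starting from an initial run $r_0$ containing only the dummy $\Write$, I iteratively apply Lemma~\ref{lem:burning_lemma} $m-1$ times: at iteration $i$, I invoke the lemma with the current run $r_i$, the shrinking value set $V_i$ (with $V_0 := \V$), empty $\Theta$, and the single reader $p$. Each application supplies a new value $v_i \in S_i \subseteq V_i$ that is $\constant[\tau-1]{V_i \setminus S_i}{\{p\}}$ in an extension $r_{i+1}$, with $\abs{S_i} \leq L$; Claim~\ref{claim:permanence_with_invisible_reader} then upgrades the permanence to $\Theta = \emptyset$ using invisibility of $\Reads$. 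Setting $V_{i+1} := V_i \setminus S_i$, the key arithmetic check is $\abs{V_{m-1}} \geq 2^D - (m-1)L \geq 2$, which is exactly where the choice $m = \ceil{\mneedceil}$ matters.

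In the second phase, I first complete any pending $\Read_p$ in $r_{m-1}$ by wait-freedom (so that Claim~\ref{claim:2tauMinusOne}'s hypothesis is satisfied), then extend by two consecutive $\Writes$ of distinct values $v^*_1, v^*_2 \in V_{m-1}$, obtaining a run $r^*$. Claim~\ref{claim:2tauMinusOne} produces a time $t$ in $r^*$ with $\abs{\slabels{v^*_1, r^*, t}} \geq \tau$ and $\abs{\slabels{v^*_2, r^*, t}} \geq \tau - 1$. Critically, every $\Write$ in $r^*$ after $r_{i+1}$ is of some value in $V_{i+1} \cup \{v^*_1, v^*_2\} \subseteq V_i \setminus S_i$, so Observation~\ref{obs:permanence_in_constructions}(2) keeps each $v_i$ at $(\tau-1)$-permanence in the prefix of $r^*$ up to $t$, contributing $\tau-1$ labels each. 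Since $v_0, \ldots, v_{m-2}, v^*_1, v^*_2$ are pairwise distinct, their label sets are disjoint, for a total of $(m-1)(\tau-1) + \tau + (\tau-1) = \tau + (\tau-1) m$ distinct labels in the shared storage at time $t$.

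The main technical hurdle I anticipate is the arithmetic verifying $(m-1)L \leq 2^D - 2$ for $m = \ceil{\mneedceil}$, which is what lets the two writes of the second phase fit inside $V_{m-1}$ and is precisely why $\ceil{\cdot}$ (rather than $\floor{\cdot}$) appears in the theorem. A secondary bookkeeping concern is handling any pending $\Writes$ left over between iterations; this can be discharged by closing out operations via wait-freedom and applying Observation~\ref{obs:permanence_in_constructions}(2) to propagate the accumulated permanences forward through each extension.
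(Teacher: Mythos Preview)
Your proposal is essentially the paper's own proof, just with a shifted index (your $m$ equals the paper's $m+1$; the paper iterates $m=\mdef$ times via Lemma~\ref{lem:construction_common_value_invisible_reads} and then appends the two extra $\Writes$, exactly as you do). One small slip: you write that each $v_i$ is $\constant[\tau-1]{V_i\setminus S_i}{\{p\}}$, but Lemma~\ref{lem:burning_lemma} yields $(\tau-1)$-\emph{permanence}, not constancy (constancy is the stronger notion introduced only for the common-write case in Section~\ref{sec:commonWrite}); the rest of your argument already treats it as permanence, so this is purely terminological.
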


When readers are invisible, the set $\Theta$ is of no significance, so we consider $\emptyset$.
Given a set of values $V$, the value added by Lemma~\ref{lem:burning_lemma} is $\permanent{\tau-1}{V \setminus S}{\emptyset}$ for a smaller set of values $V \setminus S$ where $\abs{S} \le L$.
Therefore, we can invoke Lemma~\ref{lem:burning_lemma} $m = \mdef$ times before running out of values, showing the following:

\begin{lemma} \label{lem:construction_common_value_invisible_reads}
	Let $p \in \Pi$ be an invisible reader.
	There exist finite runs $r_0,...,r_{m}$ and sets of values $V_0 \supset V_1 \supset ... \supset V_{m}$ and $U_0 \subset U_1 \subset ... \subset U_m$, such that for all $0 \le k \le m$:
	\begin{enumerate}
		\item $\abs{V_k} \ge 2^D - L k$, $\abs{U_k} = k$, $V_k \cap U_k = \emptyset$, and
		\item all elements of $U_k$ are $\permanent{\tau-1}{V_k}{\emptyset}$ in $r_k$.
	\end{enumerate}
\end{lemma}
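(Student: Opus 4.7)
The plan is to prove the lemma by induction on $k$, repeatedly invoking Lemma~\ref{lem:burning_lemma} to extract a new permanent value at each step, and using Claim~\ref{claim:permanence_with_invisible_reader} plus Observation~\ref{obs:permanence_in_constructions} to propagate the permanence invariants to the extended run.

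For the base case $k=0$, I would take $r_0$ to be the empty run (or any initial run with no pending operations), set $V_0 = \V$, and $U_0 = \emptyset$. Then $\abs{V_0} = 2^D$, $\abs{U_0} = 0$, $V_0 \cap U_0 = \emptyset$, and condition~2 is vacuous.

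For the inductive step, assume $r_k$, $V_k$, $U_k$ satisfying (1)--(2) with $k < m$. I would first observe that $V_k$ is non-empty: since $k \le m-1$ and $Lm \le 2^D - 1$ by the definition of $m$, we get $\abs{V_k} \ge 2^D - Lk \ge 2^D - L(m-1) \ge L + 1 \ge 2$. Thus Lemma~\ref{lem:burning_lemma} applies to $V = V_k$, $\Theta = \emptyset$, and reader $p$, yielding a subset $S_k \subseteq V_k$ with $1 \le \abs{S_k} \le L$ and an extension $r_{k+1}$ of $r_k$ in which some $v_{k+1} \in S_k$ is $\permanent{\tau-1}{V_k \setminus S_k}{\{p\}}$, and in $r_{k+1} \setminus r_k$ the writes are restricted to values in $V_k$. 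Since $p$ is invisible, Claim~\ref{claim:permanence_with_invisible_reader} strengthens this to $\permanent{\tau-1}{V_k \setminus S_k}{\emptyset}$. I would then set $V_{k+1} = V_k \setminus S_k$ and $U_{k+1} = U_k \cup \{v_{k+1}\}$.

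It remains to verify conditions (1) and (2) for $k+1$. Condition~(1) is elementary bookkeeping: $\abs{V_{k+1}} \ge 2^D - L(k+1)$ follows from $\abs{S_k} \le L$; $v_{k+1} \in V_k$ and $V_k \cap U_k = \emptyset$ guarantee $v_{k+1} \notin U_k$, so $\abs{U_{k+1}} = k+1$; and $V_{k+1} \cap U_{k+1} = \emptyset$ since $V_{k+1} \subseteq V_k$ is disjoint from $U_k$ and from $\{v_{k+1}\} \subseteq S_k$. For condition~(2), the new value $v_{k+1}$ is $(\tau-1, V_{k+1}, \emptyset)$-permanent in $r_{k+1}$ by construction. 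For each older $u \in U_k$, the inductive hypothesis gives $(\tau-1, V_k, \emptyset)$-permanence in $r_k$; since the suffix $r_{k+1} \setminus r_k$ writes only values in $V_k$ and no reader in $\emptyset$ takes steps, item~2 of Observation~\ref{obs:permanence_in_constructions} transfers this to $(\tau-1, V_{k+1}, \emptyset)$-permanence in $r_{k+1}$, as required.

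The only real obstacle is confirming that the arithmetic on $m$ keeps $V_k$ large enough (in particular non-empty) throughout all $m$ applications of Lemma~\ref{lem:burning_lemma}, and that the ``permanence propagation'' via Observation~\ref{obs:permanence_in_constructions} is legal at every step; both reduce to checking that the restriction on writes and readers in the suffix matches the hypotheses. No calculation beyond these inclusions is needed.
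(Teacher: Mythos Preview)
Your proposal is correct and follows essentially the same route as the paper's own proof: an induction that repeatedly invokes Lemma~\ref{lem:burning_lemma} with $\Theta=\emptyset$, uses Claim~\ref{claim:permanence_with_invisible_reader} to drop $\{p\}$ from the permanence parameters of the new value, and uses Observation~\ref{obs:permanence_in_constructions} (with $\Theta_1=\emptyset$, so the reader restriction is vacuous) to carry the older values' permanence forward. Your arithmetic check that $V_k$ stays non-empty is slightly more explicit than the paper's, but otherwise the arguments match.
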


\begin{proof}
	By induction.
	Base: $r_0$ is the empty run, $V_0 = \V$ and $U_0 = \emptyset$.
	Assume inductively that the lemma holds for $k < m$.
	Since $m < \mneedceil$, we get: $ |V_k| > 2^D - L \mneedceil = 1. $
	Since $V_k$ is non-empty and $\abs{\emptyset} < R$, by Lemma~\ref{lem:burning_lemma} there exist an extension $r_{k+1}$ of $r_k$ where $\Writes$ in \suffix{r_{k+1}}{r_k} are limited to values from $V_k$, a set $S \subset V_k$, $1 \le \abs{S} \le L$, and a value $v \in S$ that is $\permanent{\tau-1}{V_k \setminus S}{\{p\}}$ in $r_{k+1}$.
	
	Let $V_{k+1} = V_k \setminus S$ and $U_{k+1} = U_{k} \cup \{v\}$.
	Note that, because $V_k \cap U_k = \emptyset$ and $v \in S \subset V_k$, we get that $V_{k+1} \cap U_{k+1} = \emptyset$ and $\abs{U_{k+1}} = \abs{U_k} + 1 = k+1$.
	Since $1 \le \abs{S} \le L$ we have that $V_k \supset V_{k+1}$ and $\abs{V_{k+1}} \ge \abs{V_k} - \abs{S} \ge 2^D - L (k+1)$.
	By the inductive assumption and Observation~\ref{obs:permanence_in_constructions}, all values in $U_k$ are $\permanent{\tau-1}{V_{k+1}}{\emptyset}$ in $r_{k+1}$.
	By Claim~\ref{claim:permanence_with_invisible_reader}, $v$ is also $\permanent{\tau-1}{V_{k+1}}{\emptyset}$ in $r_{k+1}$ and we are done.
	\qedhere
\end{proof}

Our bound combines the $2\tau-1$ blocks of Claim~\ref{claim:2tauMinusOne} with the $(\tau-1)m$ from Lemma~\ref{lem:construction_common_value_invisible_reads}:

\begin{proof}[Proof (Theorem~\ref{thm:common_value_invisible_reads_bound})]
	Consider an invisible reader $p \in \Pi$ and construct $r_m$, $V_m$, and $U_m$ as in Lemma~\ref{lem:construction_common_value_invisible_reads}.
	Note that $V_m$ contains at least two distinct values that are not in $U_m$, since $V_m \cap U_m = \emptyset$ and $\abs{V_m} \ge 2^D - Lm > 2^D - L \mneedceil = 1$.
	Extend $r_m$ to $r_{m+1}$ by invoking and returning $\wr{v}$ and $\wr{v'}$ for $v,v' \in V_{m}$.
	
	By Claim~\ref{claim:2tauMinusOne}, there is a time $t \ge \finalof{r_m}$ in $r_{m+1}$ when there are $2\tau-1$ blocks in the \sharedStorage{} with origin values of $v$ or $v'$.
	In addition, by Lemma~\ref{lem:construction_common_value_invisible_reads}, $U_m$ consists of $m$ values that are $\permanent{\tau-1}{V_m}{\emptyset}$ in $r_m$, and since $\Writes$ in \suffix{r_{m+1}}{r_m} are of values from $V_m$, the values in $U_m$ remain $\permanent{\tau-1}{V_m}{\emptyset}$ in $r_{m+1}$.
	By Observation~\ref{obs:n_ge_slabels}:
	\[ n
	\ge 2\tau - 1 + (\tau-1) m
	=        \tau + (\tau-1)(m+1)
	= \resCommonValueInvisible.
	\qedhere
	\]
\end{proof}

\subsection{Visible reads} \label{subsect:commonValueVisibleReads}

We now consider systems where readers may write \meta{} in the shared storage.
We use a similar technique as in Lemma~\ref{lem:construction_common_value_invisible_reads}, except that due to readers' $\updates$, the indistinguishability argument can no longer be used.
Instead, we invoke a new reader for each extension, and therefore the number of runs might be limited by the number of readers, $R$:

\begin{theorem} \label{thm:common_value_visible_reads_bound}
	The storage cost of a regular \disint{} wait-free MRSW register emulation with $R$ readers is at least \resCommonValueVisible{} blocks.
\end{theorem}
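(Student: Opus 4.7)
The plan is to mimic Lemma~\ref{lem:construction_common_value_invisible_reads}, but thread a growing set $\Theta_k \subset \Pi$ of ``used'' readers alongside the sets $V_k$ of still-writable values and $U_k$ of already-established $(\tau-1)$-permanent values. The critical difference from the invisible case is that Claim~\ref{claim:permanence_with_invisible_reader} is no longer available, so the permanence guarantee produced by each call to Lemma~\ref{lem:burning_lemma} holds only with respect to a $\Theta$ that includes one additional reader. Each inductive step therefore consumes a fresh reader, which caps the iteration count at $R$ (on top of the $\ceil{\mneedceil}$ cap inherited from the value domain), accounting for the $\min$ appearing in $\resCommonValueVisible$.

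Concretely, I will set $m \triangleq \min\!\left(R, \ceil{\mneedceil}\right) - 1$ and prove by induction on $k = 0,1,\ldots,m$ the existence of runs $r_0,\ldots,r_m$ and nested families $V_0 \supset \cdots \supset V_m$, $U_0 \subset \cdots \subset U_m$, $\Theta_0 \subset \cdots \subset \Theta_m$ satisfying $\abs{V_k} \ge 2^D - Lk$, $\abs{U_k} = \abs{\Theta_k} = k$, $V_k \cap U_k = \emptyset$, and every $u \in U_k$ is $\permanent{\tau-1}{V_k}{\Theta_k}$ in $r_k$. The base case takes $r_0$ to be empty with $V_0 = \V$ and $U_0 = \Theta_0 = \emptyset$. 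In the inductive step, $\abs{\Theta_k} = k < R$ supplies a fresh reader $p_k \in \Pi \setminus \Theta_k$, and $\abs{V_k} \ge 1$ follows from $m \le \ceil{\mneedceil} - 1 < \mneedceil$. Applying Lemma~\ref{lem:burning_lemma} with $V = V_k$ and $\Theta = \Theta_k$ yields an extension $r_{k+1}$, a subset $S_k \subseteq V_k$ of size between $1$ and $L$, and a value $v_k \in S_k$ that is $\permanent{\tau-1}{V_k \setminus S_k}{\Theta_k \cup \{p_k\}}$ in $r_{k+1}$; setting $V_{k+1} = V_k \setminus S_k$, $U_{k+1} = U_k \cup \{v_k\}$, $\Theta_{k+1} = \Theta_k \cup \{p_k\}$ preserves the invariants. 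Previously established values in $U_k$ retain permanence (with respect to the smaller $V_{k+1}$ and larger $\Theta_{k+1}$) by Observation~\ref{obs:permanence_in_constructions}(2), since Lemma~\ref{lem:burning_lemma} guarantees that \suffix{r_{k+1}}{r_k} contains only $\Writes$ of values from $V_k$ and no steps by readers in $\Theta_k$.

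I will then extend $r_m$ to $r_{m+1}$ by invoking and returning two consecutive $\Writes$ of distinct values $v, v' \in V_m$, which exist because $\abs{V_m} \ge 2^D - Lm > 1$. Claim~\ref{claim:2tauMinusOne}, applied to $r_{m+1}$ with any reader that has no pending $\Read$, yields a time $t$ between the two writes' returns at which $\abs{\slabels{v, r_{m+1}, t}} \ge \tau$ and $\abs{\slabels{v', r_{m+1}, t}} \ge \tau - 1$. At this same $t$, truncating $r_{m+1}$ gives an extension of $r_m$ in which $\Writes$ are restricted to $V_m$ and readers in $\Theta_m$ take no steps, so by permanence each $u \in U_m$ contributes $\ge \tau - 1$ blocks. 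Since $U_m$, $\{v\}$, $\{v'\}$ are pairwise disjoint origin value sets, Observation~\ref{obs:n_ge_slabels} delivers $n \ge (\tau-1)m + \tau + (\tau-1) = \tau + (\tau-1)(m+1) = \resCommonValueVisible$.

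The main obstacle is ensuring that the permanence of $U_m$ established in $r_m$ transfers to the \emph{intermediate} time $t$ inside the two extra writes, not merely to the final time of the extended run. This works because permanence is quantified over every finite extension, and the prefix of $r_{m+1}$ ending at $t$ is itself an admissible such extension; together with Observation~\ref{obs:permanence_in_constructions}(2), the bookkeeping of the nested sets $V_k$, $U_k$, $\Theta_k$ closes the argument.
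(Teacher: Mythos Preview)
Your proposal is correct and follows essentially the same approach as the paper: the paper packages your inductive construction as a separate lemma (Lemma~\ref{lem:construction_common_value_visible_reads}) with exactly the same $V_k,U_k,\Theta_k$ invariants, then finishes with the same two extra $\Writes$ and Claim~\ref{claim:2tauMinusOne}. The only minor point worth tightening is your appeal to Claim~\ref{claim:2tauMinusOne}: you need a reader with no pending $\Read$, and the paper secures one by noting $\abs{\Theta_m}=m\le R-1<R$, so some reader in $\Pi\setminus\Theta_m$ has taken no steps.
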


To achieve this bound, we use Lemma~\ref{lem:burning_lemma} again to construct $N = \commonValueN$ extensions of the empty run (note that it does not assume invisible $\Reads$).

\begin{lemma} \label{lem:construction_common_value_visible_reads}
	There exist finite runs $r_0,...,r_{N}$, sets of values $V_0 \supset V_1 \supset ... \supset V_{N}$ and $U_0 \subset U_1 \subset ... \subset U_N$, and sets of readers $\Theta_0 \subset \Theta_1 \subset ... \subset \Theta_N$, such that for all $0 \le k \le N$:
	\begin{enumerate}
		\item $\abs{V_k} \ge 2^D - L k$,  $\abs{U_k} = \abs{\Theta_k} = k$,  $V_k \cap U_k = \emptyset$, and
		\item all elements of $U_k$ are $\permanent{\tau-1}{V_k}{\Theta_k}$ in $r_k$.
	\end{enumerate}
\end{lemma}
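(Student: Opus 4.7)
The plan is to prove the lemma by induction on $k$, essentially mirroring the proof of Lemma~\ref{lem:construction_common_value_invisible_reads}, with the key difference that we can no longer appeal to Claim~\ref{claim:permanence_with_invisible_reader} to strip the active reader from the forbidden set $\Theta$; instead, we accumulate readers in $\Theta_k$ and that is precisely why the construction is now bounded by $R$ in addition to $\lceil (2^D-1)/L \rceil$.

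For the base case, I would take $r_0$ to be the empty run, $V_0 = \V$, $U_0 = \emptyset$, and $\Theta_0 = \emptyset$; all conditions hold trivially. For the inductive step, assume the lemma for some $k < N = \commonValueN$. From $k < N$ I would extract two facts: first, $k < \ceil{\mneedceil}$, which combined with $|V_k| \ge 2^D - Lk$ yields $|V_k| \ge 1$ (in fact $|V_k| > 1$, as needed for the final theorem), so $V_k$ is non-empty; second, $k < R$, hence $|\Theta_k| = k < R = |\Pi| - 1$, so there exists a reader $p \in \Pi \setminus \Theta_k$. This is exactly what we need to invoke Lemma~\ref{lem:burning_lemma} with parameters $V = V_k$, $\Theta = \Theta_k$, and this particular $p$.

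Lemma~\ref{lem:burning_lemma} then yields an extension $r_{k+1}$ of $r_k$ where \Writes{} in \suffix{r_{k+1}}{r_k} are restricted to $V_k$, readers in $\Theta_k$ do not take steps, a set $S \subseteq V_k$ with $1 \le |S| \le L$, and a value $v \in S$ that is $\permanent{\tau-1}{V_k \setminus S}{\Theta_k \cup \{p\}}$ in $r_{k+1}$. I would then set $V_{k+1} = V_k \setminus S$, $U_{k+1} = U_k \cup \{v\}$, and $\Theta_{k+1} = \Theta_k \cup \{p\}$. The size conditions follow directly: $|V_{k+1}| \ge 2^D - L(k+1)$, $|U_{k+1}| = k+1$ (using $v \in V_k$ and $V_k \cap U_k = \emptyset$), $|\Theta_{k+1}| = k+1$, and $V_{k+1} \cap U_{k+1} = \emptyset$. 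For permanence, $v$ is $(\tau-1,V_{k+1},\Theta_{k+1})$-permanent in $r_{k+1}$ directly from Lemma~\ref{lem:burning_lemma}; for each previously added value in $U_k$, I would apply part~2 of Observation~\ref{obs:permanence_in_constructions}, noting that $r_{k+1}$ is a finite extension of $r_k$ whose suffix writes only values in $V_k$ and in which readers in $\Theta_k$ take no steps, and that $V_{k+1} \subseteq V_k$ and $\Theta_{k+1} \supseteq \Theta_k$.

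The main obstacle, which is really a bookkeeping subtlety rather than a deep difficulty, is tracking that the bound $N$ correctly captures the two independent bottlenecks simultaneously: the value-domain exhaustion (each step shrinks $V_k$ by at least one) and the reader exhaustion (each step consumes a fresh reader, which we cannot recycle since visible actions make its footprint on the shared storage permanent). Writing $N = \commonValueN$ ensures $k \le N$ implies both that $V_k$ contains a value to write and that $\Pi \setminus \Theta_k$ contains a reader to invoke Lemma~\ref{lem:burning_lemma} with; no other step of the induction departs from the invisible-reads template.
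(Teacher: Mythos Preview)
Your proposal is correct and follows essentially the same approach as the paper's own proof: induction with the same base case, the same invocation of Lemma~\ref{lem:burning_lemma} using a fresh reader $p \in \Pi \setminus \Theta_k$, the same definitions $V_{k+1} = V_k \setminus S$, $U_{k+1} = U_k \cup \{v\}$, $\Theta_{k+1} = \Theta_k \cup \{p\}$, and the same appeal to Observation~\ref{obs:permanence_in_constructions} to propagate permanence of the old values. Your explicit remark that the two bottlenecks (value exhaustion and reader exhaustion) are precisely what force $N = \commonValueN$ is exactly the point of the construction.
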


\begin{proof}
	By induction.
	Base: $r_0$ is the empty run, $V_0 = \V$, $\Theta_0 = U_0 = \emptyset$.
	Assume inductively such $r_k$, $V_k$, $U_k$, and $\Theta_k$ for $k < N$, and construct $r_{k+1}$ as follows:
	since $R - \abs{\Theta_k} > 0$, there is a reader $p \in \Pi \setminus \Theta_k $.
	Since $N < \mneedceil$, we get $|V_k| > 2^D - L N > 1$.
	Therefore, by Lemma~\ref{lem:burning_lemma}, there exist an extension $r_{k+1}$ of $r_k$ where in \suffix{r_{k+1}}{r_k} $\Writes$ are limited to values from $V_k$ and readers in $\Theta_k$ do not take steps, a set $S \subseteq V_k$, $1 \le \abs{S} \le L$, and a value $v \in S$ that is $\permanent{\tau-1}{V_k \setminus S}{\Theta_k \cup \{p\}}$ in $r_{k+1}$.
	
	Let $V_{k+1} = V_k \setminus S$ and $U_{k+1} = U_{k} \cup \{v\}$.
	Note that, because $V_k \cap U_k = \emptyset$ and $v \in S \subset V_k$, it follows that $V_{k+1} \cap U_{k+1} = \emptyset$ and $\abs{U_{k+1}} = k+1$.
	Furthermore, since $1 \le \abs{S} \le L$, we get: $V_k \supset V_{k+1}$ and $\abs{V_{k+1}} \ge \abs{V_k} - \abs{S} \ge 2^D - L (k+1)$.
	Finally, let $\Theta_{k+1} = \Theta_k \cup \{p\}$.
	By the inductive assumption and Observation~\ref{obs:permanence_in_constructions}, all values in $U_k$ are $\permanent{\tau-1}{V_{k+1}}{\Theta_{k+1}}$ in $r_{k+1}$, and so all of $U_{k+1}$ is $\permanent{\tau-1}{V_{k+1}}{\Theta_{k+1}}$ in $r_{k+1}$, as needed.
	\qedhere
\end{proof}

From Lemma~\ref{lem:construction_common_value_visible_reads}, in $r_N$ there is a set of $N$ $(\tau-1)$-permanent values, inducing a cost of $(\tau-1)N$.
We use Claim~\ref{claim:2tauMinusOne} to increase the bound by $2\tau-1$ additional blocks.

\begin{proof}[Proof (Theorem~\ref{thm:common_value_visible_reads_bound})]
	Construct $r_N$, $V_N$, $U_N$, and $\Theta_N$ as in Lemma~\ref{lem:construction_common_value_visible_reads}.
	Note that, since $R - N \ge 1$, there exists $p \in \Pi \Theta_N $.
	Since $V_N \cap U_N = \emptyset$ and $\abs{V_N} \ge 2^D - LN > 2^D - L \mneedceil = 1$, $V_N \setminus U_N$ contains at least two values.
	Extend $r_N$ to $r_{N+1}$ by invoking and returning $\wr{v}$ and $\wr{v'}$ for $v,v' \in V_{N} \setminus U_N$.
	
	By Claim~\ref{claim:2tauMinusOne}, there is a time $t \ge \finalof{r_N}$ in $r_{N+1}$ when there are $2\tau-1$ blocks in the \sharedStorage{} with origin values of $v$ or $v'$.
	$U_N$ consists of $N$ additional values that are $\permanent{\tau-1}{V_N}{\Theta_N}$ in $r_N$, and since in \suffix{r_{N+1}}{r_N} $\Writes$ are of values from $V_N$ and no reader in $\Theta_N$ takes steps, the values in $U_N$ remain $\permanent{\tau-1}{V_N}{\Theta_N}$ in $r_{N+1}$.
	By Observation~\ref{obs:n_ge_slabels}, the storage cost is:
	\[ n \ge 2\tau - 1 + (\tau-1) N = \tau + (\tau-1)(N+1) = \resCommonValueVisible.
	\qedhere
	\]
\end{proof}

\section{Lower bounds for common write disintegrated storage}
\label{sec:commonWrite}

While the results of the previous section hold a fortiori for \commonWrite{} algorithms, in this case we are able to show stronger results, independent of the local storage size.
Intuitively, this is because readers can no longer reuse blocks they obtained from previous $\Writes$ of the same value, and so we can prolong the execution that blows up the shared storage by rewriting values.
Section~\ref{subsect:commonWriteGeneralObservations} proves a general attribute of \commonWrite{} algorithms. 
We show in Section~\ref{subsect:commonWriteInvisibleReads} that even with a single reader (and a single writer), if $\Reads$ are invisible, then the required storage cost is at least \resCommonWriteInvisible.
In Section~\ref{subsect:commonWriteVisibleReads} we prove a bound for visible $\Reads$.

\subsection{General observation} \label{subsect:commonWriteGeneralObservations}

In this section we define a property that is a special case of $k$-permanence, which additionally requires that the set of labels associated with a $\Write$ does not change.
\begin{definition}[Constancy]
	Consider a finite run $r$, $k \in \N$, a set $S \subseteq \V$, and a set of readers $\Theta \subset \Pi$.
	We say that a $\Write$ $w \in \W$ is $\constant[k]{S}{\Theta}$ in $r$ if in every finite extension $r'$ of $r$ s.t.\ in \suffix{r'}{r} readers in $\Theta$ do not take actions and $\Writes$ are limited to values from $S$, $\slabels{w,r', \finalof{r'}} = \slabels{w, r, \finalof{r}}$ and $\abs{\slabels{w,r',\finalof{r'}}} = k$.
\end{definition}
Similarly to Claim~\ref{claim:permanence_with_invisible_reader}, it can be shown that:
\begin{observation} \label{obs:constancy_with_invisible_reader}
	Consider $V \subseteq \V$, $k \in \N$, and a finite run $r$ with an invisible reader $p \in \Pi$.
	If $w \in \W$ is $\constant[k]{V}{\{p\}}$ in $r$ then $w$ is $\constant[k]{V}{\emptyset}$ in $r$.
\end{observation}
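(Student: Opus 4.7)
The plan is to mirror the proof of Claim~\ref{claim:permanence_with_invisible_reader} almost verbatim, since constancy differs from permanence only by additionally demanding equality of the shared label set rather than merely a lower bound on its size. Both conditions are formulated in terms of $\slabels{w,\rfinal{\cdot}}$, which is determined entirely by the \objects{}' data and the abstract labels attached to their blocks, so the same invisibility-based indistinguishability argument carries over directly.

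Concretely, I would argue by contradiction. Assume $w$ is $\constant[k]{V}{\{p\}}$ in $r$ but not $\constant[k]{V}{\emptyset}$ in $r$. By the definition of constancy, there exists an extension $r'$ of $r$ in which $\Writes$ in \suffix{r'}{r} are limited to values from $V$ (with no restriction on any reader, since the blocked set is $\emptyset$), and yet $\slabels{w,\rfinal{r'}} \ne \slabels{w,\rfinal{r}}$ or $\abs{\slabels{w,\rfinal{r'}}} \ne k$. Let $r''$ be the extension of $r$ obtained from $r'$ by deleting all actions of $p$ in \suffix{r'}{r}. Since $p$ is invisible, each removed action is a $\get_p$ that modifies only $p$'s local \data{} and \meta{}, so deleting them preserves the sequence of state transitions of the writer and of every \object{}. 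Hence $\finalof{r''} \indist \finalof{r'}$, and in particular every object's data block and the abstract labels attached to it coincide between the two runs, yielding $\slabels{w,\rfinal{r''}} = \slabels{w,\rfinal{r'}}$.

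Now $r''$ is an extension of $r$ in which the reader in $\{p\}$ takes no steps in \suffix{r''}{r} and $\Writes$ are limited to values from $V$, so the assumed $\constant[k]{V}{\{p\}}$ status of $w$ in $r$ forces $\slabels{w,\rfinal{r''}} = \slabels{w,\rfinal{r}}$ and $\abs{\slabels{w,\rfinal{r''}}} = k$. Combined with $\slabels{w,\rfinal{r''}} = \slabels{w,\rfinal{r'}}$, this contradicts the choice of $r'$. No substantive obstacle arises; the only thing worth verifying is that the abstract labelling on objects is produced solely by writer $\updates$ (by the rule that the $k^{th}$ $\update$ of a $\Write$ tags the stored block with $\mylabel{w,k}$), so invisible reader steps leave the $\slabelsnoarg$ sets untouched.
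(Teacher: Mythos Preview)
Your proposal is correct and takes essentially the same approach the paper intends: the paper explicitly says the observation ``can be shown'' similarly to Claim~\ref{claim:permanence_with_invisible_reader}, and your argument is precisely that adaptation---remove the invisible reader's steps from a violating extension, use $\indist$ to transfer the $\slabelsnoarg$ set, and contradict $\constant[k]{V}{\{p\}}$.
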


We next prove a stronger variant of Lemma~\ref{lem:burning_lemma} that allows us to add a permanent $\Write$ to the shared storage while some set $C \subseteq \W$ of $\Writes$ are constant.
Note that since the number of $\Writes$ of a value $v$ is infinite and the number of constant $\Writes$ in a finite run is finite, for any non-empty $V \subseteq \V$, $\W_V \setminus C$ is non-empty.

\begin{lemma} \label{lem:burning_lemma_writes}
	Consider a non-empty set of values $V \subseteq \V$, a set of readers $\Theta \subset \Pi$, a reader $p \in \Pi \setminus \Theta$, and a finite run $r$ of a wait-free regular \commonWrite\ algorithm.
	Let $C$ be a set of $\Writes$ that are $\constant{V}{\Theta}$ in $r$.
	Then there is an extension $r'$ of $r$ where some $w \in \W_V \setminus C$ returns and is $\permanent{\tau-1}{V}{\Theta \cup \{p\}}$, and s.t.\ in \suffix{r'}{r} $\Writes$ are limited to $\W_V$ and readers in $\Theta$ do not take actions.
\end{lemma}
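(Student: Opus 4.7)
The plan is to mirror the contradiction strategy of Lemma~\ref{lem:burning_lemma}, but track \emph{writes} rather than values and exploit the constancy of writes in $C$. I assume for contradiction that the lemma fails: in every extension $r'$ of $r$ whose suffix restricts $\Writes$ to $\W_V$ and forbids $\Theta$-readers to act, no $w \in \W_V \setminus C$ both returns in $r'$ and is $\permanent{\tau-1}{V}{\Theta \cup \{p\}}$ in $r'$. From this I construct an extension in which a pending $\Read_p$ takes infinitely many actions without returning, contradicting wait-freedom.

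First, extend $r$ by completing any pending $\Read_p$ and pending $\Write$, invoking and (by wait-freedom) returning a $\wr{v_0}$ for some $v_0 \in V$, and finally invoking without letting it step a fresh $\Read_p$ operation $rd$; call this $r_0$. The invariant I maintain as $p$ proceeds is
\[
	\psi(\hat{r},t) \triangleq \forall w \in \writes{\hat{r},t} \cap (\W_V \setminus C) \, : \, \abs{\alllabels{w,\hat{r},t}} < \tau .
\]
While $\psi$ holds, $rd$'s return is not enabled: by regularity the only legal return values are those written in the extension, i.e., values in $V$; every $w \in C$ satisfies $\abs{\llabels{w,\hat{r},t}} \le \abs{\slabels{w,\hat{r},t}} = \tau - 1$ by constancy; any $w \in (\W_V \setminus C) \setminus \writes{\hat{r},t}$ has $\llabels{w,\hat{r},t} = \emptyset$; and $\psi$ bounds the remaining candidates. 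Thus no $v \in V$ has a witness write meeting the common-write $\tau$-disintegration requirement.

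Inductively, given $r_k$ with $\psi$ and $rd$ pending, wait-freedom enables some $p$-action $\step$; let it occur. If $\step$ does not introduce a label of any write in $(\W_V \setminus C) \setminus \writes{r_k}$, then Observation~\ref{obs:no_new_labels} guarantees that every $\alllabels{w,\cdot,\cdot}$ can only shrink, so $\psi$ is preserved and $r_{k+1}$ is the resulting run. Otherwise $p$ obtains a block whose origin is a fresh $w \in (\W_V \setminus C) \setminus \writes{r_k}$ with $\abs{\llabels{w,\cdot,\cdot}} = 1$. I extend twice more: first schedule only the writer (introducing no new $\Writes$) until $w$ returns, possible by wait-freedom; then, since $w$ has returned in a run whose suffix respects the hypothesis' constraints, apply the contradicting assumption to further extend to $r_{k+1}$ with $\abs{\slabels{w,\rfinal{r_{k+1}}}} \le \tau - 2$, its suffix limited to $\Writes$ in $\W_V$ and idle for $\Theta \cup \{p\}$. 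Then $\abs{\alllabels{w,\rfinal{r_{k+1}}}} \le 1 + (\tau-2) < \tau$. For any other $w' \in \writes{\rfinal{r_{k+1}}} \cap (\W_V \setminus C)$, $p$ took no action after its single step, so $\llabels{w',\cdot,\cdot}$ is inherited from $r_k$; new $\Writes$ produce labels only for themselves, so $\slabels{w',\cdot,\cdot}$ can only shrink from $r_k$. Hence $\alllabels{w',\rfinal{r_{k+1}}} \subseteq \alllabels{w',\rfinal{r_k}}$ and the inductive bound carries over, while $C$-writes retain their $\tau-1$ shared labels by constancy.

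Iterating this step indefinitely yields an infinite run in which $rd$ performs unboundedly many actions yet never returns, contradicting wait-freedom. The main subtlety is the fresh-origin case: invoking the contradiction hypothesis requires $w$ to have already returned, so one must first splice in writer steps to complete $w$ and then splice in a further extension that kills enough of $w$'s blocks, all while preserving the constancy of $C$ and without silently boosting the labels of any already-tracked $w' \in \writes{\rfinal{r_k}} \cap (\W_V \setminus C)$. This is kept clean by three facts: new $\Writes$ generate labels only for themselves, readers never increase $\alllabels{\cdot,\cdot,\cdot}$ (Observation~\ref{obs:no_new_labels}), and our suffix constraints are strictly stronger than those defining $C$-constancy.
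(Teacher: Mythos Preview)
Your proof is correct and follows essentially the same contradiction-and-invariant strategy as the paper's. Two bookkeeping steps deserve one more word of justification: the bound $\abs{\llabels{w,\hat r,t}}\le\abs{\slabels{w,\hat r,t}}$ for $w\in C$ holds because constancy freezes $\slabels{w,\cdot,\cdot}$ and $p$ begins $rd$ with empty data, so every $w$-label $p$ ever acquires already lies in that fixed set of size $\tau-1$; and your claim that $\slabels{w',\cdot,\cdot}$ ``can only shrink'' for the other $w'\in\writes{\rfinal{r_k}}\cap(\W_V\setminus C)$ relies on $w'$ having already returned, which follows since the at-most-one pending write is never in $\writes{\rfinal{r_k}}$ (an easy side induction). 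The paper sidesteps the second issue by carrying ``no $\Write$ pending'' in the inductive hypothesis and completing any pending write \emph{after} the non-permanence extension rather than before, and it keeps $C$-writes inside $\psi$ with a dedicated case; your variants are equally valid.
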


\begin{proof}
	Assume by contradiction that the lemma does not hold.
	We build an extension $r'$ of $r$ where a $\Read_p$ operation includes infinitely many actions of $p$ yet does not return.
	To this end, we show that the following property holds at specific times in \suffix{r'}{r}:
	\[ \invariantWrites{\hat{r},t} \triangleq \forall w \in \writes{\hat{r},t} \cap \W_V \, : \, \abs{\alllabels{w,\hat{r},t}} < \tau. \]
	Note that, by definitions of \commonWrite{} and of $\alllabelsnoarg$, whenever $\invariantWrites{r',t}$ holds, no pending $\Read_p$ invocation can return a value $v \in \values{r',t} \cap V$.
	
	First, extend $r$ to $r_0$ by returning any pending $\Read_p$ and $\Write$, invoking and returning a $\wr{v_0}$ for some $v_0 \in \V$ (the operations eventually return, by wait-freedom), and finally invoking a $\Read_p$ operation $rd$ without allowing it to take actions.
	We now prove by induction that for all $k \in \N$, there exists an extension $r'$ of $r_0$ where (1) $\invariantWrites{r',\finalof{r'}}$ holds, (2) no $\Write$ is pending at $\finalof{r'}$, and in \suffix{r'}{r}: (3) $\Writes$ are restricted to $\W_V$, (4) $p$ performs $k$ actions on objects after invoking $rd$, (5) $rd$'s return is not enabled, and (6) processes in $\Theta$ do not take steps.
	
	Base: for $k = 0$, consider $r' = r_0$.
	(2,4,6) hold trivially.
	(3) holds since the only $\Write$ in \suffix{r'}{r} is $w_0 \in \W_V$.
	Since $p$ performs no actions following the invocation of $rd$, $\pdata{r',\finalof{r'}}$ is empty.
	Therefore, (1) $\invariantWrites{r',\finalof{r'}}$ is vacuously true, and $\llabels{w,r',\finalof{r'}}$ is empty for all $w \in \W_V$, thus (5) $rd$'s return is not enabled by \commonWrite{}.
	
	Step: assume inductively such an extension $r_1$ of $r_0$ with $k \ge 0$ actions by $p$ following $rd$'s invocation.
	Since $rd$ cannot return, by wait-freedom, an action $\step$ is enabled on some \object{}.
	We construct an extension $r_2$ of $r_1$ by letting $\step$ occur at time $\finalof{r_1}$.
	We then consider three cases:
	
	1. $p$ does not obtain a block with an origin $\Write$ in $\W_V \setminus \writes{\rfinal{r_1}}$ at $\step$, and therefore $\left( \writes{\rfinal{r_2}} \cap \W_V \right) \subseteq \left( \writes{\rfinal{r_1}} \cap \W_V \right)$.
	Then, by Observation~\ref{obs:no_new_labels} and the inductive hypothesis, (1) $\invariantWrites{\rfinal{r_2}}$ holds and thus, by \commonWrite{}, $rd$ cannot return any value $v \in \values{\rfinal{r_2}} \cap V$ at $\finalof{r_2}$.
	(5) It cannot return any other value in $\values{\rfinal{r_2}}$ by regularity, and $r_2$ satisfies the induction hypothesis for $k+1$ as (2,3,4,6) trivially hold.
	
	2. $p$ obtains a block with origin $\Write$ $w' \in C \cap \W_V \setminus \writes{\rfinal{r_1}}$ at $\step$.
	Then, in particular, $\abs{\llabels{w',\rfinal{r_1}}} = 0$. Since $w'$ is $\constant{V}{\Theta}$ in $r$ and in \suffix{r_1}{r} $\Writes$ are restricted to $\W_V$ and processes in $\Theta$ do not take steps (inductively), then by definition of constancy, $\abs{\slabels{w',\rfinal{r_1}}} = \tau - 1$.
	By Observation~\ref{obs:no_new_labels}, for all $w \in \writes{\rfinal{r_2}} \cap \W_V$: $\alllabels{w, \rfinal{r_2}} \subseteq \alllabels{w, \rfinal{r_1}}$.
	Thus, $\abs{\alllabels{w', \rfinal{r_2}}} \le \abs{\llabels{w',\rfinal{r_1}}} + \abs{\slabels{w',\rfinal{r_1}}} = \tau - 1$.
	Together with the inductive hypothesis, $\forall w \in \writes{\rfinal{r_2}} \cap \W_V \setminus \{w'\}$, $\abs{\alllabels{w, \rfinal{r_2}}} \le \abs{\alllabels{w, \rfinal{r_1}}} < \tau$; $\invariantWrites{r_2,\finalof{r_2}}$ follows, thus (5) follows, and (2,3,4,6) trivially hold.
	
	3. $p$ obtains a block with origin $\Write$ $w' \in \W_V \setminus \left( \writes{\rfinal{r_1}} \cup C \right)$ at $\step$.
	Then, in particular, $\abs{\llabels{w',r_2,\finalof{r_2}}} = 1$ and the number of labels of other $\Writes$ in $\writes{\rfinal{r_2}}$ does not increase following $\step$, thus $rd$'s return is not enabled at $\finalof{r_2}$ by \commonWrite\ and regularity.
	
	By the contradicting assumption, $w'$ is not $\permanent{\tau-1}{V}{\Theta \cup \{p\}}$ in $r_2$, thus there is an extension $r_3$ of $r_2$ s.t.\ $\abs{\slabels{w',\rfinal{r_3}}} < \tau-1$ and in \suffix{r_3}{r_2} $\Writes$ are limited to $\W_V$ and no readers in $\Theta \cup \{p\}$ take steps (3,4,6 hold).S
	We further extend $r_3$ to $r_4$ by letting any pending $\Write$ return (2).
	
	Let $S = \writes{\rfinal{r_2}} \cap \W_V$.
	Since every $w \in S$ returns before $\finalof{r_2}$ by the inductive assumption, the $\Writes$ in \suffix{r_4}{r_2} do not produce new labels associated with $w$.
	Since readers do not affect the sets $\slabelsnoarg$, it follows that $\forall w \in S\, : \, \slabels{w,\rfinal{r_4}} \subseteq \slabels{w,\rfinal{r_3}} \subseteq \slabels{w,\rfinal{r_2}}$.
	Next, $p$ takes no steps in \suffix{r_4}{r_2} (4 holds), thus $\forall w \in S: \, \llabels{w',\rfinal{r_4}} = \llabels{w',\rfinal{r_2}}$.
	It follows that:
	\begin{equation} \label{eq:psi_holds_for_w'}
	\abs{\alllabels{w',\rfinal{r_4}}} \le \abs{\llabels{w',\rfinal{r_2}}} + \abs{\slabels{w',\rfinal{r_3}}} < 1 + (\tau-1) = \tau.
	\end{equation}
	Moreover, by Observation~\ref{obs:no_new_labels} and the inductive assumption that $\invariantWrites{\rfinal{r_1}}$ holds,
	\begin{equation} \label{eq:psi_holds_for_S_minus_w'}
	\forall w \in S \setminus \{w'\} \, : \, \abs{\alllabels{w,\rfinal{r_4}}} \le \abs{\alllabels{w,\rfinal{r_1}}} < \tau.
	\end{equation} 
	
	From Equations~\ref{eq:psi_holds_for_w'} and~\ref{eq:psi_holds_for_S_minus_w'}, and since $\writes{\rfinal{r_4}} \cap \W_V = \writes{\rfinal{r_2}} \cap \W_V = S$, we get (1) $\invariantWrites{\rfinal{r_4}}$.
	Since $rd'$s return is not enabled at $\finalof{r_2}$ and (4) it took no actions since, its return is not enabled anywhere in \suffix{r_4}{r_1} (5), and we are done.
	\qedhere
\end{proof}

\subsection{Invisible reads} \label{subsect:commonWriteInvisibleReads}

We prove the following theorem by constructing a run with an exponential number of $\tau$-permanent values.
The idea is to show that if there is a value in the domain for which there is no $\tau$-permanent $\Write$, then infinitely many $\Writes$ remain $(\tau-1)$-constant, which is of course impossible.

\begin{theorem} \label{thm:common_write_invisible_reads_bound}
	The storage cost of a regular \commonWrite{} wait-free SRSW register emulation where $\Reads$ are invisible is at least \resCommonWriteInvisible{} blocks.
\end{theorem}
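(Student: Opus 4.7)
The plan is to prove the theorem by building a single run $r^\dagger$ in whose final configuration the shared storage carries, for each value $v \in \V$, at least $\tau$ distinct labels originating from some $\wr{v}$ operation invoked in $r^\dagger$. Since distinct $\Writes$ produce pairwise disjoint label sets, the $2^D$ such writes contribute $\tau \cdot 2^D$ distinct labels altogether, and Observation~\ref{obs:n_ge_slabels} then yields $n \ge \tau \cdot 2^D$.

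The heart of the argument is a key sub-lemma: for any finite run $r$ and any $v \in \V$, there is an extension $r'$ of $r$ containing a $\wr{v}$ operation $w$ that is $\permanent{\tau}{\V}{\emptyset}$ in $r'$. Given this, $r^\dagger$ is constructed by induction along an enumeration $v_1, \dots, v_{2^D}$ of $\V$: at each step apply the sub-lemma to the current run with $v_k$ and take the resulting extension. Previously installed $\tau$-permanent writes retain their permanence in the new extended run, since the constraint $\Theta = \emptyset$ is vacuous and $V = \V$ imposes no restriction on subsequent writes. After $2^D$ steps we have the required run.

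The sub-lemma is proved by contradiction. Assume that some $v$ and some base run $r$ admit no extension containing a $\tau$-permanent $\wr{v}$. We then iteratively invoke Lemma~\ref{lem:burning_lemma_writes} with $V = \V$, $\Theta = \emptyset$, and $p$ the single reader, maintaining a growing set $C$ of $(\tau-1)$-constant writes; Claim~\ref{claim:permanence_with_invisible_reader} and Observation~\ref{obs:constancy_with_invisible_reader} let us work with $\Theta = \emptyset$ throughout. Each application yields a fresh $(\tau-1)$-permanent write $w_k \in \W_{\V} \setminus C$ in a new extension. Between applications we further extend to drive $w_k$ into a $(\tau-1)$-constant state, exploiting the fact that past $w_k$'s return, labels of $w_k$ in the shared storage can only be removed (only the writer alters data, and only with labels of a subsequent $\Write$), so $\slabels{w_k}$ is monotonically non-increasing along extensions while permanence keeps its cardinality $\ge \tau-1$. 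The shrinking therefore halts in finitely many steps, either at cardinality $\tau-1$ (giving $(\tau-1)$-constancy, so $w_k$ joins $C$) or at some cardinality $\ge \tau$ beyond which no extension removes a further label---in which case $w_k$ is $\tau$-permanent; if additionally $w_k \in \W_v$ this directly contradicts the standing assumption, and otherwise $w_k$ simply contributes at least $\tau$ persistent labels without entering $C$.

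Either way, every iteration appends to the run a fresh $\Write$ contributing at least $\tau-1$ new labels, disjoint from all earlier contributions. After more than $n/(\tau-1)$ iterations the shared storage would carry more than $n$ distinct labels, violating Observation~\ref{obs:n_ge_slabels}; this is the contradiction that proves the sub-lemma. The main obstacle is the conversion of $(\tau-1)$-permanent writes into $(\tau-1)$-constant ones: the label-shrinking extensions must respect the constancy restrictions of every write already in $C$, which here reduces to keeping $\Writes$ in $\V$ (all of them) and---vacuously---avoiding reader actions. Threading this through cleanly, and handling the branching case where the minimum label count exceeds $\tau-1$ without stalling the outer iteration, is the most delicate bookkeeping of the proof.
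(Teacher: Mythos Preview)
Your sub-lemma is stated for a \emph{prescribed} value $v$ with permanence relative to all of $\V$, and that is where the argument breaks. Permanence with $S = \V$ and $\Theta = \emptyset$ requires the write to retain $\tau$ labels under arbitrary extensions, including ones that invoke further $\wr{v}$ operations; in any algorithm where a later $\wr{v}$ may overwrite an earlier one's blocks, no individual $\wr{v}$ has this property, so the sub-lemma need not hold. Your proof sketch reflects the difficulty: when the iteration produces a write $w_k$ that stabilizes at $\ge \tau$ labels with $w_k \notin \W_v$, you can neither add $w_k$ to $C$ (Lemma~\ref{lem:burning_lemma_writes} needs members of $C$ to carry \emph{exactly} $\tau-1$ shared labels---case~2 of its proof uses $\abs{\slabels{w',\rfinal{r_1}}} = \tau - 1$ to keep $\abs{\alllabels{w',\rfinal{r_2}}} < \tau$) nor prevent the next invocation from returning the same $w_k$ (the lemma excludes only $C$, and a $\tau$-permanent $w_k \notin C$ is itself a legitimate $(\tau-1)$-permanent output). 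The iteration may stall on $w_k$ indefinitely, so the ``more than $n/(\tau-1)$ fresh writes'' count never materializes.

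The paper weakens the sub-lemma (Claim~\ref{claim:some_write_leaves_tau}) to yield a $\tau$-permanent write of \emph{some} value in a working set $V$, not a predetermined one. Negating that weaker statement forces every iterate to be genuinely $(\tau-1)$-constant, so $C$ strictly grows and the contradiction is clean. The outer induction then removes the discovered value from $V$; earlier values stay $\tau$-permanent because subsequent writes are confined to the shrinking $V_k$ (Observation~\ref{obs:permanence_in_constructions}). Replacing your fixed enumeration of $\V$ with this adaptive choice of the next value is the missing idea.
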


\begin{lemma} \label{lem:common_write_someone_constant_invisible}
	Consider a non-empty set of values $V \subseteq \V$ and a finite run $r$.
	Let $C$ be a set of $\Writes$ that are $\constant{V}{\emptyset}$ in $r$.
	Then there exists an extension $r'$ of $r$ where $\Writes$ in \suffix{r'}{r} are limited to $\W_V$, and some $w \in \W_V \setminus C$ is either $\constant{V}{\emptyset}$ or $\permanent{\tau}{V}{\emptyset}$ in $r'$.
\end{lemma}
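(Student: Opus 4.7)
The plan is to apply Lemma~\ref{lem:burning_lemma_writes} to obtain an extension in which some fresh $\Write$ $w \in \W_V \setminus C$ is already $(\tau-1)$-permanent, and then to perform a dichotomy: either it is even $\tau$-permanent (first alternative of the conclusion), or we can further extend to pin it down to a constant set of exactly $\tau-1$ labels (second alternative). Concretely, let $p$ be the (invisible) reader of the SRSW emulation and invoke Lemma~\ref{lem:burning_lemma_writes} with $\Theta = \emptyset$ on $r$ (the hypothesis on $C$ matches, as constancy w.r.t.\ $\emptyset$ is identical to constancy w.r.t.\ $\Theta=\emptyset$). This yields an extension $r^*$ of $r$, with $\Writes$ in \suffix{r^*}{r} limited to $\W_V$, in which some $w \in \W_V \setminus C$ has returned and is $\permanent{\tau-1}{V}{\{p\}}$. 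Because $p$ is invisible and Claim~\ref{claim:permanence_with_invisible_reader} applies to any $z \in \V \cup \W$, I upgrade this to $\permanent{\tau-1}{V}{\emptyset}$.

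Now the dichotomy. If $w$ is also $\permanent{\tau}{V}{\emptyset}$ in $r^*$, take $r' = r^*$ and we are done. Otherwise, by the negation of $\tau$-permanence there is an extension $r^{**}$ of $r^*$, with $\Writes$ in \suffix{r^{**}}{r^*} restricted to $\W_V$, such that $\abs{\slabels{w,r^{**},\finalof{r^{**}}}} < \tau$; combined with $(\tau-1)$-permanence carrying over to $r^{**}$ by Observation~\ref{obs:permanence_in_constructions}, this forces $\abs{\slabels{w,r^{**},\finalof{r^{**}}}} = \tau - 1$. I will then argue that $w$ is actually $\constant{V}{\emptyset}$ in $r^{**}$.

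The main obstacle is precisely this last step: turning a mere $(\tau-1)$-permanence into constancy. The crucial observation is that labels of the form $\mylabel{w,k}$ can only be introduced by $\update$ events within the operation $w$ itself, and $w$ has already returned in $r^*$. Since $\Reads$ are invisible (and hence do not modify the shared storage) and subsequent $\Writes$ in the suffix belong to other operations (each of whose $\updates$ can only overwrite $w$'s labels, never create them), the map $t \mapsto \slabels{w,\hat r,t}$ is monotonically non-increasing as a set for every extension $\hat r$ of $r^*$ whose suffix writes lie in $\W_V$. Applying this to any extension $r^{***}$ of $r^{**}$ with $\Writes$ in \suffix{r^{***}}{r^{**}} restricted to $\W_V$ gives $\slabels{w,r^{***},\finalof{r^{***}}} \subseteq \slabels{w,r^{**},\finalof{r^{**}}}$, while $(\tau-1)$-permanence inherited from $r^*$ gives $\abs{\slabels{w,r^{***},\finalof{r^{***}}}} \ge \tau-1 = \abs{\slabels{w,r^{**},\finalof{r^{**}}}}$. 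The two sets therefore coincide and have size exactly $\tau-1$, so $w$ is $\constant{V}{\emptyset}$ in $r^{**}$; setting $r' = r^{**}$ finishes this case and the proof.
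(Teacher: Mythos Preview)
Your proposal is correct and follows essentially the same approach as the paper: apply Lemma~\ref{lem:burning_lemma_writes} to get a $(\tau-1)$-permanent $w \in \W_V \setminus C$ that has returned, then split on whether $w$ is $\tau$-permanent; if not, extend to a point where $w$ has exactly $\tau-1$ labels and argue constancy from the fact that no new $w$-labels can ever appear once $w$ has returned. The only cosmetic difference is that you invoke the invisibility of $p$ up front (via Claim~\ref{claim:permanence_with_invisible_reader}) to work with $\Theta=\emptyset$ throughout, whereas the paper carries $\Theta=\{p\}$ to the end and then uses Observation~\ref{obs:constancy_with_invisible_reader}; one small caveat is that Observation~\ref{obs:permanence_in_constructions}(2) is stated for values rather than writes, so where you cite it you could instead appeal directly to the definition of permanence (which is all the paper does).
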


\begin{proof}
	Let $p \in \Pi$ be a reader.
	By Lemma~\ref{lem:burning_lemma_writes}, there is an extension $r'$ of $r$ where $\Writes$ in \suffix{r'}{r} are limited to $\W_V$ and some $w \in \W_V \setminus C$ returns and is $\permanent{\tau-1}{V}{\{p\}}$.
	By Claim~\ref{claim:permanence_with_invisible_reader}, if $w$ is $\permanent{\tau}{V}{\{p\}}$ in $r'$, then $w$ is $\permanent{\tau}{V}{\emptyset}$ in $r'$ and the lemma follows.
	Otherwise, there exists an extension $r''$ of $r'$ where in \suffix{r''}{r'} $\Writes$ are limited to $\W_V$ and $p$ takes no steps, and $\abs{\slabels{w,r'',\finalof{r''}}} < \tau$.
	Since $w$ is $\permanent{\tau-1}{V}{\{p\}}$ in $r'$, $\abs{\slabels{w,r'',\finalof{r''}}} = \tau - 1$.
	
	We show that $w$ is $\constant{V}{\emptyset}$ in $r''$.
	Consider an extension $r'''$ of $r''$ where $\Writes$ are limited to values from $V$ and $p$ takes no steps in \suffix{r'''}{r''}.
	Since $w$ has already returned by time $\finalof{r''}$, no new blocks with an origin $\Write$ of $w$ can be added to the shared storage in $r'''$ after $\finalof{r''}$.
	It follows that $\slabels{w,\rfinal{r'''}} \subseteq \slabels{w,\rfinal{r''}}$.
	However, since $w$ is $\permanent{\tau-1}{V}{\{p\}}$ in $r'$, and in \suffix{r'''}{r'} $\Writes$ are limited $\W_V$ and $p$ takes no steps, then $\abs{\slabels{w, \rfinal{r'''}}} \ge \tau - 1 = \abs{\slabels{w,\rfinal{r''}}}$, yielding that $\slabels{w,r''',\finalof{r'''}} = \slabels{w,r'', \finalof{r''}}$.
	Thus, $w$ is $\constant{V}{\{p\}}$ in $r''$.
	The lemma follows from Observation~\ref{obs:constancy_with_invisible_reader}.
\end{proof}

\begin{claim} \label{claim:some_write_leaves_tau}
	Consider a finite run $r$ and a non-empty $V \subseteq \V$. Then there is an extension $r'$ of $r$ s.t.\ $\Writes$ in \suffix{r'}{r} are limited to $\W_V$, and some $w \in \W_V$ is $\permanent{\tau}{V}{\emptyset}$ in $r'$.
\end{claim}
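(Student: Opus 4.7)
The plan is to iterate Lemma~\ref{lem:common_write_someone_constant_invisible}, maintaining a growing set $C$ of $\Writes$ that are $\constant{V}{\emptyset}$, until we obtain a $\tau$-permanent $\Write$. Suppose for contradiction that no extension of $r$ yields any $\Write$ in $\W_V$ that is $\permanent{\tau}{V}{\emptyset}$. Then each application of the lemma must return the constant alternative, and we can build a chain indefinitely.

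Concretely, I would construct inductively a sequence of extensions $r = r^{(0)} \subseteq r^{(1)} \subseteq \ldots$ and sets $C_0 \subseteq C_1 \subseteq \ldots$ with $\abs{C_i} = i$ such that in every \suffix{r^{(i)}}{r} $\Writes$ are restricted to $\W_V$ and every $w \in C_i$ is $\constant{V}{\emptyset}$ in $r^{(i)}$. The base case is trivial. For the step, apply Lemma~\ref{lem:common_write_someone_constant_invisible} with $C = C_i$ to $r^{(i)}$: this yields an extension $r^{(i+1)}$ and some $w_{i+1} \in \W_V \setminus C_i$ that is either $\permanent{\tau}{V}{\emptyset}$ (in which case we are done and return $r^{(i+1)}$) or $\constant{V}{\emptyset}$; in the latter case set $C_{i+1} = C_i \cup \{w_{i+1}\}$.

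A point to verify carefully is that the previously-collected constants in $C_i$ remain $\constant{V}{\emptyset}$ in $r^{(i+1)}$. This follows directly from the definition of constancy: $r^{(i+1)}$ is itself an extension of $r^{(i)}$ in which $\Writes$ are limited to $\W_V$ and no readers (in $\emptyset$) take steps, so for each $w \in C_i$ the set $\slabels{w,\cdot}$ is preserved; moreover, constancy applies to \emph{every} further such extension of $r^{(i+1)}$ as well, since these are also such extensions of $r^{(i)}$.

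Assuming the process never terminates, we obtain infinitely many distinct $\Writes$ $w_1, w_2, \ldots$, each $\constant{V}{\emptyset}$ in the corresponding $r^{(k)}$ with $\abs{\slabels{w_i, r^{(k)}, \finalof{r^{(k)}}}} = \tau - 1$ for all $i \le k$. The main (minor) obstacle is noting that the contributions to shared storage are disjoint: labels produced by different $\Writes$ lie in disjoint sets $\{w_i\} \times \N$, so the total number of distinct labels present in the shared storage at time $\finalof{r^{(k)}}$ is at least $k(\tau-1)$. Observation~\ref{obs:n_ge_slabels} then forces $n \ge k(\tau-1)$ for every $k$, contradicting the fact that the emulation has finitely many objects. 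Hence the iteration must terminate with a $\tau$-permanent $\Write$, proving the claim.
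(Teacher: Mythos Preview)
Your proposal is correct and follows essentially the same approach as the paper's proof: both assume for contradiction that no $\permanent{\tau}{V}{\emptyset}$ $\Write$ arises, iterate Lemma~\ref{lem:common_write_someone_constant_invisible} to accumulate an ever-growing set of $\constant{V}{\emptyset}$ $\Writes$, and derive a contradiction with the finiteness of the shared storage via Observation~\ref{obs:n_ge_slabels}. The only cosmetic difference is that the paper fixes the number of iterations in advance as $m = \ceil{n/(\tau-1)} + 1$ to reach the contradiction in one stroke, whereas you phrase it as ``$n \ge k(\tau-1)$ for every $k$''; both are equivalent.
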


\begin{proof}
	Consider an algorithm with storage cost $n$, and let $m = \ceil{n/(\tau-1)} + 1$.
	Assume by contradiction that the claim does not hold.
	We get a contradiction by constructing $m + 1$ extensions of $r$; $r_0, ..., r_m$ with sets of $\Writes$ $C_0 \subset C_1 \subset \cdots \subset C_m \subseteq \W_v$ s.t.\ for all $0 \le k \le m$:
	\begin{enumerate}
		\item $\Writes$ in \suffix{r_k}{r} are limited to $\W_V$, and
		\item $C_k$ is a set of $k$ $\Writes$ that are $\constant{V}{\emptyset}$ in $r_k$.
	\end{enumerate}
	Note that in $r_m$, $\ceil{\frac{n}{\tau-1}}+1$ $\Writes$ are $\constant{V}{\emptyset}$, implying a storage cost greater than $n$ by Observation~\ref{obs:n_ge_slabels}, a contradiction.
	
	The construction is by induction.
	The base case vacuously holds for $r_0 = r$, $C_0 = \emptyset$.
	Assume inductively such $r_k$ and $C_k$ for $k < m$.
	By Lemma~\ref{lem:common_write_someone_constant_invisible} there exists an extension $r_{k+1}$ of $r_k$ where some $w \in \W_V \setminus C_k$ is either $\permanent{\tau}{V}{\emptyset}$ or $\constant{V}{\emptyset}$, and $\Writes$ in \suffix{r_{k+1}}{r_k} are limited to $\W_V$.
	Since all $\Writes$ in $C_k$ are $\constant{V}{\emptyset}$ in $r_k$ they are also $\constant{V}{\emptyset}$ in $r_{k+1}$.
	By the contracting assumption, $w$ is not $\permanent{\tau}{V}{\emptyset}$ in $r_{k+1}$ thus it is $\constant{V}{\emptyset}$ in the run.
	Let $C_{k+1} = C_k \cup \{w\}$, therefore $\abs{C_{k+1}} = k + 1 $ and all $\Writes$ in $C_{k+1}$ are $\constant{V}{\emptyset}$ in $r_{k+1}$, as needed.
\end{proof}

We are now ready to prove our lower bound of \resCommonWriteInvisible\ blocks:

\begin{proof}[Proof (Theorem~\ref{thm:common_write_invisible_reads_bound})]
	We show that there exist $2^D+1$ finite runs $r_0, r_1, \ldots, r_{2^D}$ and sets of values $V_0 \supset V_1 \supset ... \supset V_{2^D}$ and $U_0 \subset U_1 \subset ... \subset U_{2^D}$, such that for all $0 \le k \le 2^D$:
	\begin{enumerate}
		\item $\abs{V_k} = 2^D - k$, $\abs{U_k} = k$, $V_k \cap U_k = \emptyset$, and
		\item all elements of $U_k$ are $\permanent{\tau}{V_{k}}{\emptyset}$ in $r_k$.
	\end{enumerate}
	
	By induction.
	Base: $r_0$ is the empty run, $V_0 = \V$, $U_0 = \emptyset$.
	Assume inductively such $r_k$, $V_k$, and $U_k$ for $k < 2^D$, and construct $r_{k+1}$ as follows:
	first, because $\abs{V_k} = 2^D - k > 0$, by Claim~\ref{claim:some_write_leaves_tau} there is an extension $r_{k+1}$ of $r_k$ where $\Writes$ in \suffix{r_{k+1}}{r_k} are limited to $\W_{V_k}$ and some $w \in \W_{V_k}$ is $\permanent{\tau}{V_{k}}{\emptyset}$.
	
	Consider the value $v \in V_k$ written by $w$.
	By Observation~\ref{obs:write-permanence-implies-value-permanence}, $v$ is $\permanent{\tau}{V_{k}}{\emptyset}$ in $r_{k+1}$.
	Let $V_{k+1} = V_k \setminus \{v\}$, then $\abs{V_{k+1}} = \abs{V_k} - 1 = 2^D - (k+1)$.
	Further let $U_{k+1} = U_{k} \cup \{v\}$.
	Note that, because $V_k \cap U_k = \emptyset$, we get $v \notin U_k$ and hence $V_{k+1} \cap U_{k+1} = \emptyset$ and $\abs{U_{k+1}} = \abs{U_k} + 1 = k+1$.
	Since $V_k \supset V_{k+1}$, then $v$ is $\permanent{\tau}{V_{k+1}}{\emptyset}$.
	Additionally, $\Writes$ in \suffix{r_{k+1}}{r_k} are from $\W_{V_k}$, thus by the inductive assumption and Observation~\ref{obs:permanence_in_constructions}, values in $U_k$ are $\permanent{\tau}{V_{k+1}}{\emptyset}$ in $r_{k+1}$, and so all of $U_{k+1}$ are $\permanent{\tau}{V_{k+1}}{\emptyset}$ in $r_{k+1}$.
	
	Finally, $U_{2^D}$ holds $2^D$ values that are $\permanent{\tau}{\emptyset}{\emptyset}$ in $r_{2^D}$.
	By Observation~\ref{obs:n_ge_slabels}:
	\[ n \ge \resCommonWriteInvisible. \qedhere \]
\end{proof}

\subsection{Visible reads} \label{subsect:commonWriteVisibleReads}

To prove a lower bound on the cost of systems with visible $\Reads$, we create a similar construction, except that the number of extensions might be limited by the number of readers, $R$.
Instead, the bound depends on $\min \left(2^D - 1 \, , \, R \right)$:

\begin{theorem} \label{thm:common_write_visible_reads_bound}
	The storage cost of a regular \commonWrite{} wait-free MRSW register emulation is at least \resCommonWriteVisible{} blocks.
\end{theorem}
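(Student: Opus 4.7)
The plan is to mirror the proof of Theorem~\ref{thm:common_value_visible_reads_bound}: build a sequence of extensions in which each inductive step ``burns'' one fresh reader into a growing set $\Theta_k$ and produces a new $(\tau-1)$-permanent value, then finish by invoking Claim~\ref{claim:2tauMinusOne} for the additive $2\tau - 1$ term. The difference from the invisible case of Theorem~\ref{thm:common_write_invisible_reads_bound} is that without Observation~\ref{obs:constancy_with_invisible_reader} we cannot upgrade $(\tau-1)$-constancy into $\tau$-permanence for free; instead we take directly the $(\tau-1)$-permanence output of Lemma~\ref{lem:burning_lemma_writes} at the price of consuming one fresh reader per iteration, capping the construction at $\min(2^D - 1, R) - 1$ rounds.

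First I would state and prove a construction lemma analogous to Lemma~\ref{lem:construction_common_value_visible_reads}, specialized to \commonWrite{} algorithms: there exist finite runs $r_0, \ldots, r_N$ with $N = \min(2^D - 1, R) - 1$, nested value sets $\V = V_0 \supset V_1 \supset \cdots \supset V_N$, and growing sets $\emptyset = U_0 \subset \cdots \subset U_N$ and $\emptyset = \Theta_0 \subset \cdots \subset \Theta_N$ satisfying $\abs{V_k} = 2^D - k$, $\abs{U_k} = \abs{\Theta_k} = k$, $V_k \cap U_k = \emptyset$, and such that every $u \in U_k$ is $\permanent{\tau-1}{V_k}{\Theta_k}$ in $r_k$. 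The induction step invokes Lemma~\ref{lem:burning_lemma_writes} with the current $V_k$ and $\Theta_k$, $C = \emptyset$, and a fresh reader $p \in \Pi \setminus \Theta_k$ (which exists since $k \le N < R$). This yields an extension $r_{k+1}$ in which some write $w \in \W_{V_k}$ is $\permanent{\tau-1}{V_k}{\Theta_k \cup \{p\}}$; Observation~\ref{obs:write-permanence-implies-value-permanence}(1) lifts this to value-level permanence of the value $v$ written by $w$, allowing us to set $V_{k+1} = V_k \setminus \{v\}$, $U_{k+1} = U_k \cup \{v\}$, $\Theta_{k+1} = \Theta_k \cup \{p\}$. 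Observation~\ref{obs:permanence_in_constructions}(2) then carries the permanence of older values in $U_k$ across this extension, since the suffix writes are restricted to $\W_{V_k}$ and processes in $\Theta_k$ remain silent.

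Given $r_N$, I would finish the theorem by extending $r_N$ with two more writes $\wr{v}$ and $\wr{v'}$ for distinct $v, v' \in V_N \setminus U_N$; such a pair exists because $\abs{V_N \setminus U_N} = \abs{V_N} = 2^D - N \ge 2$. Applying Claim~\ref{claim:2tauMinusOne} with any reader outside $\Theta_N$ (available since $N < R$) exposes a time $t$ between the two writes at which the shared storage holds $\tau$ blocks with origin value $v$ and $\tau - 1$ blocks with origin value $v'$. At the same time, every value in $U_N$ still carries $\tau - 1$ shared labels by Observation~\ref{obs:permanence_in_constructions}(2), since between $\finalof{r_N}$ and $t$ writes are confined to $\W_{V_N}$ and $\Theta_N$ is silent. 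Observation~\ref{obs:n_ge_slabels} then yields
\[
n \;\ge\; \tau + (\tau-1) + N(\tau-1) \;=\; \tau + (N+1)(\tau-1) \;=\; \tau + (\tau-1)\min(2^D - 1, R),
\]
matching \resCommonWriteVisible{}.

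The main subtlety is the bookkeeping that shows the new value produced by Lemma~\ref{lem:burning_lemma_writes} at each step is always distinct from those already in $U_k$; this follows cleanly from $w \in \W_{V_k}$ together with the invariant $V_k \cap U_k = \emptyset$, since $V_k$ shrinks by exactly the one chosen value per round. A related point is that the permanence claims must survive past the two final writes used in Claim~\ref{claim:2tauMinusOne}; this is handled by Observation~\ref{obs:permanence_in_constructions}(2), whose hypotheses (writes restricted to a superset of $V_N$ and processes in a subset of $\Theta_N$ silent) are preserved as $V_k$ shrinks and $\Theta_k$ grows monotonically through the construction.
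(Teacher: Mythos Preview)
Your proposal is correct and follows essentially the same approach as the paper: it states and proves the analogue of Lemma~\ref{lem:construction_common_write_visible_reads} by repeatedly applying Lemma~\ref{lem:burning_lemma_writes} (with $C=\emptyset$) to burn one fresh reader per round, then finishes with Claim~\ref{claim:2tauMinusOne} and Observation~\ref{obs:n_ge_slabels} exactly as the paper does. The bookkeeping you highlight (distinctness of the new value via $V_k \cap U_k = \emptyset$, and persistence of permanence across the final two writes via Observation~\ref{obs:permanence_in_constructions}) matches the paper's treatment.
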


Let $N = \commonWriteN$.
We build a run with $N$ $(\tau-1)$-permanent values:

\begin{lemma} \label{lem:construction_common_write_visible_reads}
	There exist finite runs $r_0, r_1, \ldots, r_{N}$, sets of values $V_0 \supset V_1 \supset ... \supset V_{N}$ and $U_0 \subset U_1 \subset ... \subset U_{N}$, and sets of readers $\Theta_0 \subset \Theta_1 \subset ... \subset \Theta_N$, s.t.\ for all $0 \le k \le N$:
	\begin{enumerate}
		\item $\abs{V_k} = 2^D - k$, $\abs{U_k} = \abs{\Theta_k} = k$, $V_k \cap U_k = \emptyset$, and
		\item all elements of $U_k$ are $\permanent{\tau-1}{V_k}{\Theta_k}$ in $r_k$.
	\end{enumerate}
\end{lemma}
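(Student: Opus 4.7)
The plan is to prove the lemma by induction on $k$, closely paralleling the proof of Lemma~\ref{lem:construction_common_value_visible_reads}, but using the write-oriented burning lemma (Lemma~\ref{lem:burning_lemma_writes}) in place of its value-oriented counterpart (Lemma~\ref{lem:burning_lemma}). The base case sets $r_0$ to the empty run, $V_0 = \V$, and $U_0 = \Theta_0 = \emptyset$; all conditions hold trivially.

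For the inductive step, assume the claim holds for some $k < N$. Since $\abs{\Theta_k} = k < N \le R$, there exists a reader $p \in \Pi \setminus \Theta_k$. Since $\abs{V_k} = 2^D - k \ge 2^D - N \ge 2$, the set $V_k$ is non-empty. Apply Lemma~\ref{lem:burning_lemma_writes} with value set $V_k$, reader set $\Theta_k$, reader $p$, and $C = \emptyset$. This yields an extension $r_{k+1}$ of $r_k$ in which some $w \in \W_{V_k}$ returns and is $\permanent{\tau-1}{V_k}{\Theta_k \cup \{p\}}$ in $r_{k+1}$, while in \suffix{r_{k+1}}{r_k} $\Writes$ are restricted to $\W_{V_k}$ and readers in $\Theta_k$ do not take actions. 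Let $v \in V_k$ be the value written by $w$; by Observation~\ref{obs:write-permanence-implies-value-permanence}, $v$ itself is $\permanent{\tau-1}{V_k}{\Theta_k \cup \{p\}}$ in $r_{k+1}$.

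Then set $V_{k+1} = V_k \setminus \{v\}$, $U_{k+1} = U_k \cup \{v\}$, and $\Theta_{k+1} = \Theta_k \cup \{p\}$. The cardinality and disjointness conditions of item~1 follow by elementary set arithmetic from $v \in V_k$ and $V_k \cap U_k = \emptyset$. For item~2, the elements of $U_k$ remain $\permanent{\tau-1}{V_{k+1}}{\Theta_{k+1}}$ in $r_{k+1}$ by Observation~\ref{obs:permanence_in_constructions}(2), applied with $V_{k+1} \subseteq V_k$, $\Theta_k \subseteq \Theta_{k+1}$, and the fact that the suffix from $r_k$ to $r_{k+1}$ obeys the required restrictions; applying the same observation to $v$ (taking the extension to be $r_{k+1}$ itself, so the suffix is empty) yields that $v$ is $\permanent{\tau-1}{V_{k+1}}{\Theta_{k+1}}$ in $r_{k+1}$ as well. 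I anticipate no real obstacle beyond this bookkeeping: the argument is a direct adaptation of the visible-read construction from Section~\ref{subsect:commonValueVisibleReads}, and the only substantive difference is that Lemma~\ref{lem:burning_lemma_writes} forces excluding merely a single value per step rather than up to $L$, which is why the count here is $\abs{V_k} = 2^D - k$ instead of $2^D - L k$.
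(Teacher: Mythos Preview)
Your proposal is correct and follows essentially the same approach as the paper: induction with the same base case, the same invocation of Lemma~\ref{lem:burning_lemma_writes} (with $C=\emptyset$) to obtain a new $(\tau-1)$-permanent $\Write$, the same passage to the written value via Observation~\ref{obs:write-permanence-implies-value-permanence}, and the same updates $V_{k+1}=V_k\setminus\{v\}$, $U_{k+1}=U_k\cup\{v\}$, $\Theta_{k+1}=\Theta_k\cup\{p\}$ with permanence propagated by Observation~\ref{obs:permanence_in_constructions}. Your added remark explaining why only one value is removed per step (in contrast to the $L$ values in Lemma~\ref{lem:construction_common_value_visible_reads}) is a nice clarification but not a departure from the paper's argument.
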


\begin{proof}
	By induction.
	Base: $r_0$ is the empty run, $V_0 = \V$, $\Theta_0 = U_0 = \emptyset$.	
	Assume inductively such $r_k$, $V_k$, $U_k$, and $\Theta_k$ for $k < N$, and construct $r_{k+1}$ as follows:
	since $R - \abs{\Theta_k} > 0$, there is a reader $p \in \Pi \setminus \Theta_k$.
	Moreover, $\abs{V_k} > 2^D - N > 0$.
	Therefore, by Lemma~\ref{lem:burning_lemma_writes}, there is an extension $r_{k+1}$ of $r_k$ where $\Writes$ in \suffix{r_{k+1}}{r_k} are limited to $\W_{V_k}$, readers in $\Theta_k$ do not take steps in \suffix{r_{k+1}}{r_k}, and some $w \in W_{V_k}$ returns and is $\permanent{\tau-1}{V_k}{\Theta_k \cup \{p\}}$ in $r_{k+1}$.
	
	Let $\Theta_{k+1} = \Theta_k \cup \{p\}$, and consider the value $v \in V_k$ written by $w$.
	By Observation~\ref{obs:write-permanence-implies-value-permanence}, $v$ is $\permanent{\tau-1}{V_k}{\Theta_{k+1}}$.
	Let $V_{k+1} = V_k \setminus \{v\}$, then $\abs{V_{k+1}} = 2^D - (k+1)$.
	Further let $U_{k+1} = U_{k} \cup \{v\}$.
	Since $V_k \cap U_k = \emptyset$, we get that $V_{k+1} \cap U_{k+1} = \emptyset$ and $\abs{U_{k+1}} = k+1$.
	
	Since $V_k \supset V_{k+1}$, $v$ is $\permanent{\tau-1}{V_{k+1}}{\Theta_{k+1}}$.
	In addition, in \suffix{r_{k+1}}{r_k} $\Writes$ are limited to $\W_{V_k}$ and readers in $\Theta_k$ do not take steps, and since $\Theta_k \subset \Theta_{k+1}$, then by the inductive assumption and Observation~\ref{obs:permanence_in_constructions}, all values in $U_k$ are $\permanent{\tau-1}{V_{k+1}}{\Theta_{k+1}}$.
	Therefore all elements of $U_{k+1}$ are $\permanent{\tau-1}{V_{k+1}}{\Theta_{k+1}}$ in $r_{k+1}$, as needed.
\end{proof}

From Lemma~\ref{lem:construction_common_write_visible_reads}, in $r_N$ there is a set of $N$ $(\tau-1)$-permanent values, inducing a cost of $(\tau-1)N$.
We use Claim~\ref{claim:2tauMinusOne} to increase the bound by $2\tau-1$ additional blocks.

\begin{proof}[Proof (Theorem~\ref{thm:common_write_visible_reads_bound})]
	Construct $r_N$, $V_N$, $U_N$, and $\Theta_N$ as in Lemma~\ref{lem:construction_common_write_visible_reads}.
	Note that, since $R - N \ge 1$, there is a reader $p \in \Pi \setminus \Theta_N$.
	Since $V_N \cap U_N = \emptyset$ and $\abs{V_N} = 2^D - N = 2^D - (\commonWriteN) \ge 2$, 	$V_N$ contains two values, and they are not in $U_N$.
	Extend $r_N$ to $r_{N+1}$ by invoking and returning $\wr{v}$ and $\wr{v'}$ for $v,v' \in V_{N}$.
	
	By Claim~\ref{claim:2tauMinusOne}, there is a time $t \ge \finalof{r_N}$ in $r_{N+1}$ when there are $2\tau-1$ blocks in the \sharedStorage{} with origin values of $v$ or $v'$.
	In addition, $U_N$ consists of $N$ values that are $\permanent{\tau-1}{V_N}{\Theta_N}$ in $r_N$, and since in \suffix{r_{N+1}}{r_N} $\Writes$ are of values from $V_N$ and no reader in $\Theta_N$ takes steps, the values in $U_N$ remain $\permanent{\tau-1}{V_N}{\Theta_N}$ in $r_{N+1}$.
	By Observation~\ref{obs:n_ge_slabels}, the storage cost amounts to at least:
	\[ n \ge 2\tau - 1 + (\tau-1) N	= \tau + (\tau-1)(N+1) = \resCommonWriteVisible.
	\qedhere
	\]
\end{proof}

\section{Discussion}
\label{sec:discussion}

We have shown lower bounds on the space complexity of regular wait-free \disint{} algorithms.
Although our bounds are stated in terms of blocks, there are scenarios where they entail concrete bounds in terms of bits.
In replication, each block stores an entire value, thus the block sizes are $D$ bits.
Other applications use symmetric coding where all blocks are of equal size.
Using a simple pigeonhole argument, it can be shown that in \disint{} emulations that use symmetric coding and that are not $(\tau+1)$-disintegrated, the size of blocks is at least $D/\tau$ bits, yielding bounds of $D \cdot 2^D$ and $D + D \frac{\tau-1}{\tau} \cdot \min \left( 2^D - 1 \, , \, R \right)$ with invisible and visible readers, respectively.

Our lower bounds for the common write case explain, for the first time, why previous coded storage algorithms have either had the readers write or consumed exponential (or even unbounded) space.
Similarly, they establish why previous emulations of large registers from smaller ones have either had the readers write, had the writer share blocks among different $\Writes$, or consumed exponential space.

Our work leaves several open questions.
First, when replication is used as a means to overcome Byzantine faults or data corruption, our results suggest that there might be an interesting trade-off between the shared storage cost and the size of local memory at the readers, and a possible advantage to systems that apply replication rather than error correction codes:
we have shown that, with invisible readers, the former require $\Omega(2^D/L)$ blocks, rather than the $\Omega(2^D)$ blocks needed by the latter.
Whether there are algorithms that achieve this lower cost remains an open question.
Second, it is unclear how the bounds would be affected by removing our assumption that each block in the shared storage pertains to a single write.
Wei~\cite{wei2018space} has provided a partial answer to this questions by showing that similar bounds hold without this assumption, but only in the case of emulating large registers from smaller ones \emph{without} \meta{} at all.
Similarly, it would be interesting to study whether allowing readers to write data (and not only signals) impacts the storage cost.
Finally, future work may consider additional sub-classes of disintegrated storage, \eg with unresponsive objects, and show that additional costs are incurred in these cases.

\section*{Acknowledgments}
\label{sec:acknowledgments}

We thank Yuval Cassuto, Gregory Chockler, Rati Gelashvili, and Yuanhao Wei for many insightful discussions on space bounds for coded storage and emulations of large registers from smaller ones.

\bibliography{bibliography}
\end{document}